\documentclass[11pt]{amsart}
\usepackage[letterpaper,total={153mm,215mm},hmarginratio=1:1,footskip=12mm]{geometry} 
\makeatletter 
\renewcommand\paragraph{\@startsection{paragraph}{4}{\z@}{.9\linespacing\@plus.3\linespacing\@minus.2\linespacing}{-.5em}{\normalfont\itshape}}
\makeatother

\usepackage{times}
\usepackage[utf8]{inputenc}
\usepackage[vietnamese,english]{babel}
\usepackage[displaymath, tightpage]{preview}
\usepackage[all, cmtip]{xy}

\usepackage{amsmath, bm}
\usepackage{amsxtra}
\usepackage{amscd}
\usepackage{amsthm}
\usepackage{amsfonts}
\usepackage{amssymb}
\usepackage{enumitem}
\usepackage{array} 
\usepackage{placeins} 
\usepackage{bbm}
\usepackage{mathtools}
\usepackage[bitstream-charter]{mathdesign}
\usepackage[T1]{fontenc}

\usepackage[style=numeric, backend=biber, maxbibnames=99]{biblatex}
\AtBeginBibliography{\sloppy}

\usepackage{tikz}
\usepackage{tikz-cd}
\usetikzlibrary{matrix,calc,arrows}
\usepackage[new]{old-arrows}

\usepackage[hidelinks]{hyperref}
\usepackage{cleveref}

\usepackage[scaled=0.8]{helvet} 

\theoremstyle{plain}
\newtheorem{theorem}{Theorem}[section]

\newtheorem{proposition}[theorem]{Proposition}
\newtheorem{corollary}[theorem]{Corollary}

\theoremstyle{definition}
\newtheorem{remark}[theorem]{Remark}
\newtheorem{definition}[theorem]{Definition}

\AddToHook{env/lemma/begin}{\crefalias{theorem}{lemma}}
\AddToHook{env/proposition/begin}{\crefalias{theorem}{proposition}}
\AddToHook{env/corollary/begin}{\crefalias{theorem}{corollary}}
\AddToHook{env/remark/begin}{\crefalias{theorem}{remark}}
\AddToHook{env/definition/begin}{\crefalias{theorem}{definition}}
\AddToHook{env/example/begin}{\crefalias{theorem}{example}}

\DeclareMathOperator{\PSL}{PSL}

\newcommand\dqd{\foreignlanguage{vietnamese}{Đinh Quý Dương}}
\newcommand\BB{Białynicki-Birula }

\title[Baker--Akhiezer sections and Landau levels]{Baker--Akhiezer Sections from Higgs bundles\\
and Landau Levels in Hyperbolic Band Theory}
\author{\dqd}
\address{Department of Mathematics, University of Pennsylvania, Philadelphia, PA 19104, USA, and Department of Education, University of Khánh Hoà, Khánh Hoà, Vi\d{\^e}t Nam}
\email{duongdinh.mp@gmail.com}
\author{Steven Rayan}
\address{Centre for Quantum Topology and Its Applications (quanTA), Department of Mathematics and Statistics, University of Saskatchewan, Saskatoon, Saskatchewan S7N 5E6, Canada}
\email{rayan@math.usask.ca}
\date{}
\begin{document}
\begin{abstract}
As observed by Kienzle and the second author, rank-two $\mathcal{O}(3)$-twisted Higgs bundles on the projective line encode all hyperbolic crystals of genus two: the spectral curves are the unit cells, and the Hitchin fibers are the abelian Brillouin zones.
In a uniform magnetic field of two flux quanta, the lowest Landau level is one-dimensional at every nonzero momentum, and, as found by Iengo and Li, its dimension jumps to two at the zero momentum.
We identify the states of the lowest Landau level with the Baker--Akhiezer sections of the Higgs bundles, and their zeros, the vortices, with the Baker--Akhiezer divisors.
A family of nonzero momenta limits to the zero momentum exactly when each of the two vortices approaches the image of the other under the spatial inversion, and the states of the jumped level are labeled by the directions of limiting.
On the Higgs bundle side, the zero momentum is the point where the Hitchin fiber intersects the Hitchin section.
Such a family of Higgs bundles limits to the Hitchin section, thereby crossing to a lower stratum of the \BB stratification,
an example of the phenomenon investigated in terms of Baker--Akhiezer divisors by the first author with Figiel and Teschner.
\end{abstract}
\maketitle

\section{Introduction}
\subsection{Hyperbolic crystals and Higgs bundles}
Bloch's theorem states that the stationary states of an electron in a crystal can be chosen as eigenfunctions of the translations of the lattice \cite{AM76, Kuc16}.
The eigenvalues of such a state define a character of the lattice, the crystal momentum of the state.
The crystal momenta form a torus, the Brillouin zone.
Maciejko and the second author extended band theory to hyperbolic band theory, the band theory of lattices in the hyperbolic plane \cite{MR21}, which has since been developed in \cite{MR22, BGKMRT22, KR22, NR24, Ray26, MSTR26} (see also \cite{Ray26b}).
In this setting, the unit cells of hyperbolic lattices are compact Riemann surfaces $\Sigma$ of genus $g\geq2$.
The abelian crystal momenta, i.e.\ the characters of the translation group $\pi_1(\Sigma)$ in $U(1)$, form a torus of real dimension $2g$ \cite{MR21}, the Jacobian of the unit cell \cite{KR22}.

There have been experimental work realized for this theoretical direction of research.
Hyperbolic lattices have been realized in circuit quantum electrodynamics \cite{KFH19}, electric circuits \cite{Len22, ZYSSZ22, Che23, ZDZSZ23}, photonic ring resonators \cite{Hua24} and superconducting resonator networks \cite{XMGTRM25}.
In electric circuits, tunable complex phases can emulate any momentum of a crystal whose unit cell has genus two, the case of this paper \cite{Che23}.

As observed by Kienzle and the second author \cite{KR22}, rank-two $\mathcal{O}(3)$-twisted Higgs bundles on the Riemann sphere $C=\mathbb{P}^1$ encode all hyperbolic crystals of genus two together with their abelian crystal momenta.
Such a Higgs bundle is a pair $(E,\phi)$ of a vector bundle $E$ of rank two on $C$ and a Higgs field $\phi\colon E\to E\otimes\mathcal{O}(3)$.
For $\phi$ trace-free and $\det E\cong\mathcal{O}(-3)$, the stable Higgs bundles form a smooth moduli space $\mathcal{M}$ of dimension $9$ (\Cref{sec-crystals}).
The Hitchin map is $h\colon\mathcal{M}\to B$, $(E,\phi)\mapsto\det\phi$, where the Hitchin base $B=H^0(\mathcal{O}(6))$ is the space of binary sextics.

A sextic $b\in B$ with simple zeros defines the spectral curve
\[ \widetilde{C}_b=\{\lambda^2+b=0\}\subset\mathrm{Tot}(\mathcal{O}(3)) \]
of the eigenvalues $\lambda$ of the Higgs fields with $\det\phi=b$.
The spectral curve has genus two and is the unit cell of a hyperbolic crystal \cite{KR22}.
For instance, the spectral curve of $b=x-x^5$ is the Bolza curve, the unit cell of the lattice $\{8,8\}$ \cite{MR21}.

In the Higgs bundle framework, a crystal momentum is a line bundle $\mathcal{L}$ of degree $0$ on $\widetilde{C}_b$.
The spectral correspondence \cite{BNR89} identifies $\mathcal{L}$ with the Higgs bundle $(E,\phi)=(\pi_\ast\mathcal{L},\pi_\ast\lambda)$ in $h^{-1}(b)$, where $\pi\colon\widetilde{C}_b\to C$ is the projection.
So the Hitchin fiber $h^{-1}(b)$ is the abelian Brillouin zone $\mathrm{Jac}(\widetilde{C}_b)$ (\Cref{tab-dict}).

\subsection{Lowest Landau levels and Baker--Akhiezer divisors}
In a uniform magnetic field, the spectrum of a charged particle on the hyperbolic plane consists of finitely many Landau levels below a continuous spectrum \cite{Com87}.
Magnetic fields on hyperbolic lattices have been studied in \cite{IAM21, SUTB22}.
On a unit cell, a magnetic field $\mathbf{B}$ with an integral number of flux quanta determines a Hermitian line bundle $L_{\mathbf{B}}$ with a unitary connection of curvature $-i\mathbf{B}$, up to the tensor product with a flat unitary line bundle.
The degree of $L_{\mathbf{B}}$ is the number of flux quanta.
The lowest Landau level of a momentum $\mathcal{L}$ consists of the holomorphic sections of $L_{\mathbf{B}}\otimes\mathcal{L}$ \cite{IL94}.
The wave function of a lowest Landau level state vanishes at isolated points, its \emph{vortices}, around which its phase winds.
A state has one vortex per flux quantum, and the positions of its vortices determine it up to scale.
Filling the lowest Landau level produces the integer quantum Hall effect, whose quantized Hall conductance is, for crystals, a Chern number of the filled bands \cite{TKNN82}.

The main theme in this paper is to interpret, for $n\geq1$, the space $H^0(\mathcal{L}\otimes K_{\widetilde{C}_b}^n)$ as the lowest Landau level of momentum $\mathcal{L}$ in the uniform magnetic field $\mathbf{B}=n\,\mathrm{vol}$ of $2n$ flux quanta.
Here $K_{\widetilde{C}_b}$ is the canonical line bundle of $\widetilde{C}_b$, and $\mathrm{vol}$ is the area form of the hyperbolic metric of curvature $-1$ on the unit cell (\Cref{sec-fields}).
For $n\geq2$, the flux exceeds $2g-2=2$, so the lowest Landau level has dimension $2n-1$ at every momentum \cite{IL94}.
For $n=1$, the flux is the borderline one: the lowest Landau level is one-dimensional at every nonzero momentum, and its dimension jumps to two at the zero momentum $\Gamma_b=\mathcal{O}_{\widetilde{C}_b}$.
Iengo and Li found this jump by counting dimensions with the Riemann--Roch theorem \cite{IL94}.
Studies of the Hall conductance as an adiabatic curvature on closed Riemann surfaces assume a flux above $2g-2$, so that the lowest Landau level has the same dimension at every momentum \cite{ASZ94, KMMW17}.

Physically, the jump is a band touching (\Cref{fig-band-touching}).
For a free electron in the magnetic field $\mathbf{B}=\mathrm{vol}$, the band above the lowest Landau level is the lowest band at zero magnetic field shifted by $1$ (\Cref{sec-fields}, cf.\ \cite{IL94, CL25}).
The lowest band at zero magnetic field vanishes exactly at $\Gamma_b$, where its state is the constant wave function.
So in the magnetic field $\mathbf{B}=\mathrm{vol}$, the band above the lowest Landau level touches it exactly at $\Gamma_b$,
thereby adding a state of this band to the lowest Landau level.

\begin{figure}[ht]
\centering
\includegraphics{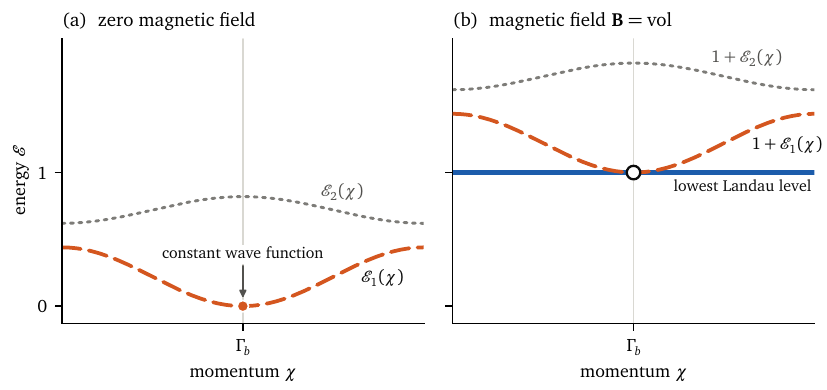}
\caption{The jump of the lowest Landau level at $\Gamma_b$ is a band touching. The bands, drawn schematically, are those of a free electron in a crystal with a unit cell of genus two, along a loop of momenta through $\Gamma_b$. (a) At zero magnetic field, the lowest band $\mathcal{E}_1$ vanishes exactly at $\Gamma_b$. (b) In the magnetic field $\mathbf{B}=\mathrm{vol}$, the bands of (a) are shifted by $1$, so the band $1+\mathcal{E}_1$ touches the lowest Landau level at $\Gamma_b$, where the level jumps from dimension $1$ to $2$.}
\label{fig-band-touching}
\end{figure}

On the Higgs bundle side, 
adjunction identifies the Landau level $H^0(\widetilde{C}_b, \mathcal{L}\otimes K_{\widetilde{C}_b}^n)$ with the space of morphisms from $\mathcal{O}(-n)$ to $E$,
\[ \mathrm{Hom}(\mathcal{O}(-n),E)\xrightarrow{\ \sim\ }H^0(\widetilde{C}_b,\mathcal{L}\otimes K_{\widetilde{C}_b}^n), \qquad i\longmapsto\widetilde{s}_i(\phi). \]
Following the work of Teschner with the first author \cite{DT23}, 
we call $\widetilde{s}_i(\phi)$ the \emph{Baker--Akhiezer section} of the triple $(\mathcal{O}(-n)\xrightarrow{i}E,\phi)$, after the Baker--Akhiezer functions of integrable systems \cite{Kri77, Dub81}.
We call the zero divisor $\widetilde{D}$ of $\widetilde{s}_i(\phi)$ the \emph{Baker--Akhiezer divisor}.%
\footnote{
The analogues for flat connections on Riemann surfaces of the points of the Baker--Akhiezer divisor are the apparent singularities of the second-order differential equation defined by a line subbundle, together with their residue parameters \cite{DT23, DFT26}.
Such a line subbundle is a generic oper structure, which every irreducible connection admits \cite{Ari16}.
These singularities arise in the classical limit of conformal field theory from insertions of degenerate fields \cite{Res92, Tes17}, and play a central role in the geometric Langlands correspondence \cite{DT23}, which has been proved by Gaitsgory--Raskin et al.\ \cite{GLC}.
}
Such divisors already appeared in the original work of Hitchin introducing the moduli of Higgs bundles \cite{Hit87a},
and more recently in the work of Hausel--Hitchin \cite{HH21}.
For $n\geq2$, the Baker--Akhiezer divisors move in the linear systems $|\mathcal{L}\otimes K_{\widetilde{C}_b}^n|$, of dimension $2n-2$ (\Cref{tab-fields}).
In band theory, the Baker--Akhiezer section is a lowest Landau level state, whose vortices are the points of the Baker--Akhiezer divisor.

The recent joint work of the first author with Figiel and Teschner 
investigates families of Higgs bundles in which two points in the norm of Baker--Akhiezer divisors collide and disappear,
indicating a crossing to a lower stratum in the moduli space \cite{DFT26}.
On spectral curves, this phenomenon correspond to two points of the Baker--Akhiezer divisor limiting to a pair exchanged by the spectral involution \cite{DFT26}.
In this paper, we show that this degeneration corresponds to a family of Higgs bundles in one fixed Hitchin fiber, 
i.e. one Brillouin zone,
that limits to the origin of the Jacobian of spectral curves $\widetilde{C}_b$,
i.e. the zero momentum $\Gamma_b$, where the lowest Landau level jumps (\Cref{prop-crossing}(ii)).

\subsection{Main results}
We now summarize the main results of this paper
in both the Higgs bundle framework and the hyperbolic band theory interpretation.
Throughout this paper, we consider smooth hyperelliptic spectral curves
\[ \widetilde{C}_b \xrightarrow{\pi} C = \mathbb{P}^1. \]

\paragraph{Lowest Landau levels and vortices}
For a nonzero momentum $\mathcal{L} \in \mathrm{Jac}(\widetilde{C}_b)$, 
the bundle $E = \pi_\ast \mathcal{L}$ splits as $\mathcal{O}(-1)\oplus\mathcal{O}(-2)$ (\Cref{sec-crystals}), so $H^0(\mathcal{L}\otimes K_{\widetilde{C}_b})$ is one-dimensional, spanned by the Baker--Akhiezer section of the inclusion of the summand $\mathcal{O}(-1)$.
Concretely, if $\phi=\left(\begin{smallmatrix}\phi_0&\phi_-\\\phi_+&-\phi_0\end{smallmatrix}\right)$ in local frames adapted to $\mathcal{O}(-1)\subset E$, the Baker--Akhiezer divisor $\widetilde{D}=\widetilde{p}_1+\widetilde{p}_2$ lies on the sheet $\lambda=-\phi_0$ over the zeros of $\phi_+$ (\Cref{rem-BA-concrete}).
The Baker--Akhiezer divisor then determines the momentum via 
\[ \mathcal{L}\cong K_{\widetilde{C}_b}^{-1}(\widetilde{D}). \]

In band theory, the lowest Landau level of a nonzero momentum in the magnetic field $\mathbf{B}=\mathrm{vol}$ is spanned by the Baker--Akhiezer section.
So the two vortices of this state determine the momentum.

\paragraph{The zero momentum}
At the zero momentum $\mathcal{L} \cong \mathcal{O}_{\widetilde{C}_b} \equiv 0 \in \mathrm{Jac}(\widetilde{C}_b)$, 
the corresponding Higgs bundle is the intersection $\mathrm{Hit}(b)$ of the Hitchin section with the Hitchin fiber $h^{-1}(b)$,
\[ \mathrm{Hit}(b)=\left(\mathcal{O}\oplus\mathcal{O}(-3),\begin{pmatrix} 0 & -b \\ 1 & 0 \end{pmatrix}\right). \]
The corresponding Baker--Akhiezer divisors of $\mathrm{Hit}(b)$ 
induced by the morphism 
\[ \mathcal{O}(-1) \xrightarrow{s_p} \mathcal{O} \hookrightarrow \mathcal{O} \oplus \mathcal{O}(-3) \]
is of the form $\pi^\ast p$.
These divisors form the exceptional curve  
$|K_{\widetilde{C}_b}| \cong |\pi^\ast \mathcal{O}(1)| \cong\mathbb{P}^1$ of the map
\[ \widetilde{C}_b^{(2)}\longrightarrow h^{-1}(b), \qquad \widetilde{D}\longmapsto K_{\widetilde{C}_b}^{-1}(\widetilde{D}), \]
which is the blow-up of $h^{-1}(b)$ at $\Gamma_b$.%
\footnote{This is an example of the shaky equal wobbly phenomenon, investigated in \cite{DDNP26}.}
A family of Higgs bundles in $h^{-1}(b)\smallsetminus\{\mathrm{Hit}(b)\}$ limits to $\mathrm{Hit}(b)$ exactly when the Baker--Akhiezer divisors limit to the curve $|K_{\widetilde{C}_b}|$ of the fibers $\pi^\ast p$ (\Cref{prop-crossing}(ii)).
If the Baker--Akhiezer divisors limit to one fiber $\pi^\ast p$, then $p$ is the direction of approach.

The translation to band theory language is that 
the jump of the lowest Landau level happens where the Brillouin zone meets the Hitchin section.
A family of nonzero momenta limits to $\Gamma_b$ exactly when each of the two vortices approaches the image of the other under the spatial inversion, the hyperelliptic involution $\lambda\mapsto-\lambda$ of $\widetilde{C}_b$.
If a family of nonzero momenta limits to $\Gamma_b$ in the complex direction $p$, its lowest Landau level states limit, up to scaling, to the state $\pi^\ast s_p$ (\Cref{sec-LLL}, \Cref{fig-vortex-pairs}).
In other words, the states of the jumped level are labeled by the directions of limiting.

\begin{figure}[ht]
\centering
\includegraphics{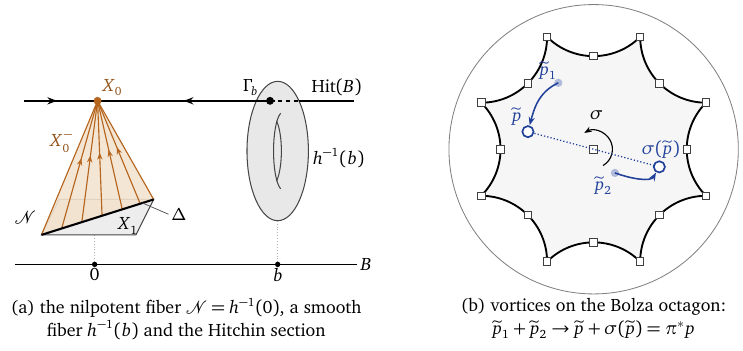}
\caption{(a) The nilpotent fiber $\mathcal{N}=h^{-1}(0)$ and a smooth fiber $h^{-1}(b)$. The Hitchin section, which is the upward flow of $X_0=\{\mathrm{Hit}(0)\}$, intersects $h^{-1}(b)$ in $\Gamma_b=\mathrm{Hit}(b)$. 
The closure of the downward flow $X_0^-$ intersects $X_1\cong C^{(2)}$ along the wobbly locus--the diagonal $\Delta$. (b) For $b=x-x^5$, the unit cell is the Bolza octagon, with the inversion $\sigma\colon z\mapsto-z$. The family of nonzero momenta limiting to $\Gamma_b$ in the complex direction $p$ has vortices $\widetilde{p}_1$ and $\widetilde{p}_2$ limiting to $\pi^\ast p=\widetilde{p}+\sigma(\widetilde{p})$.}
\label{fig-vortex-pairs}
\end{figure}

\paragraph{Degeneration to the nilpotent fiber}
Under the $\mathbb{C}^\ast$-action $t.(E,\phi)=(E,t\phi)$ on $\mathcal{M}$, as $t\to0$ 
every Higgs bundle flows to a fixed point in the nilpotent fiber $\mathcal{N}=h^{-1}(0)$ (\Cref{sec-BBdec}).
These fixed points are called \emph{Hodge bundles} and satisfy the Griffiths transversality condition, 
i.e.\ $(L_1\oplus L_2,\phi)$ with $\phi(L_1)\subset L_2\otimes\mathcal{O}(3)$ and $\phi(L_2)=0$.
In our setting, Hodge bundles are either the uniformizing Higgs bundle $\mathrm{Hit}(0)$ or of the form
\[ \left(\mathcal{O}(-1)\oplus\mathcal{O}(-2),\begin{pmatrix} 0 & 0 \\ s_D & 0 \end{pmatrix}\right), \]
where $s_D\in H^0(\mathcal{O}(2))$ vanishes on a divisor $D\in C^{(2)}$.
As $D$ varies, these Hodge bundles form a component $X_1\cong C^{(2)}$ of the fixed point locus (\Cref{prop-fixed}).
A Higgs bundle in $h^{-1}(b)\smallsetminus\{\mathrm{Hit}(b)\}$ with Baker--Akhiezer divisor $\widetilde{D}$ flows to the Hodge bundle with $D=\mathrm{Nm}(\widetilde{D})=\pi(\widetilde{p}_1)+\pi(\widetilde{p}_2)$ (\Cref{lem-BA}).

Following Donagi--Pantev and Hausel--Hitchin \cite{DP09, HH21}, 
we say that a Hodge bundle is \emph{very stable} if no other nilpotent Higgs bundle flows to it, and \emph{wobbly} otherwise.
The nilpotent fiber is the union $\mathcal{N}=X_1\cup\overline{X_0^-}$ of two surfaces intersecting along the diagonal of $X_1$, where $X_0^-$ consists of the Higgs bundles $x$ with $\lim_{t\to\infty}t.x=\mathrm{Hit}(0)$ (\Cref{prop-nilpotent-component}, \Cref{fig-vortex-pairs}(a)).
As for $K$-twisted Higgs bundles on curves of genus $g>1$ \cite[Thm.~1.2]{HH21},
the Hodge bundle with divisor $D$ is wobbly exactly when $D$ is not reduced,
i.e.\ when $D$ lies on the diagonal $\Delta \subset C^{(2)} \cong X_1$ (\Cref{cor-wobbly}).
For $D=p_1+p_2$ with $p_1\neq p_2$, the upward flow of the Hodge bundle, i.e.\ the set of Higgs bundles that flow to it, intersects a generic Hitchin fiber in the four Higgs bundles with Baker--Akhiezer divisors $\widetilde{p}_1+\widetilde{p}_2$, $\widetilde{p}_k\in\pi^{-1}(p_k)$ (\Cref{lem-BA}).

In band theory, the $\mathbb{C}^\ast$-action changes neither the crystal nor the momentum, but the limit remembers only the positions of the two vortices up to the inversion.
A very stable Hodge bundle in $X_1$ is the limit of four momenta of a generic crystal, one for each way of placing the two vortices over the two positions.
At a wobbly Hodge bundle, the two positions collide.

\paragraph{Upward flows, theta functions and a four-dimensional crystal}
For $K$-twisted Higgs bundles, Hausel--Hitchin \cite{HH21} study the Fourier--Mukai transforms of the structure sheaves of upward flows,
and relate them to mirror symmetry and  
the approach of Donagi--Pantev to the geometric Langlands correspondence \cite{DP09}.
Let $J=h^{-1}(b)$, and let $\alpha\in X_1$ be a very stable Hodge bundle whose upward flow intersects $J$ in four points.
The Fourier--Mukai transform of the structure sheaf of these four points pulls back along the Abel--Jacobi map $\widetilde{C}_b\to J$ to the sum of the four spectral line bundles (\Cref{sec-theta}).
At a nonzero momentum $\mathcal{L}$, the Baker--Akhiezer section is up to scaling the pullback along the Abel--Jacobi map of a translate $\vartheta_{\mathcal{L}}$ of the theta function (\Cref{sec-theta}).

In band theory language, the pullback to the unit cell 
of the transform is the sum of the momentum sectors of the four momenta of the upward flow.

The Brillouin zone $J$, a torus of real dimension four, is also the Brillouin zone of a four-dimensional Euclidean crystal, whose unit cell is the dual torus $\widehat J$ \cite{KR22}.
Whereas the unit cell $\widetilde{C}_b$ of the hyperbolic crystal is curved and of real dimension two, the unit cell $\widehat J$ of this crystal is flat and of real dimension four.
When $\widehat J$ is identified with $J$ by the principal polarization, the Abel--Jacobi map embeds the hyperbolic unit cell in the four-dimensional one as the theta divisor $\Theta$.
In the magnetic field of the principal polarization, the lowest Landau level of the four-dimensional crystal is one-dimensional at every momentum $\mathcal{L}$, including $\Gamma_b$, and is spanned by $\vartheta_{\mathcal{L}}$.
Pulling back along the Abel--Jacobi map sends these lowest Landau level states of the Euclidean crystal to those of the hyperbolic crystal in the magnetic field $\mathbf{B}=\mathrm{vol}$.
The four-dimensional crystal suggests answers to the questions of \cite{KR22} on the crystallographic meaning of the principal polarization, the theta divisor and theta functions (\Cref{sec-4d}, \Cref{tab-crystals}).

\begin{table}[ht]
\centering
\renewcommand{\arraystretch}{1.5}
\begin{tabular}{>{\raggedright\arraybackslash}p{0.45\textwidth}|>{\raggedright\arraybackslash}p{0.45\textwidth}}
\textbf{Higgs bundles} & \textbf{Band theory} \\ \hline
generic point $b$ of the Hitchin base, spectral curve $\widetilde{C}_b$ & crystal with unit cell $\widetilde{C}_b$ \\
Hitchin fiber $h^{-1}(b)=\mathrm{Jac}(\widetilde{C}_b)$ & abelian Brillouin zone \\
spectral line bundle $\mathcal{L}=\mathcal{L}_{(E,\phi)}$ & crystal momentum $\chi$, with $\mathcal{L}=(L_\chi,\nabla_\chi^{0,1})$ \\
$L^2$ sections of $\mathcal{L}$ & wave functions of momentum $\mathcal{L}$ \\ \hline
point $\mathrm{Hit}(b)$ of the Hitchin section, with $\mathcal{L}=\mathcal{O}_{\widetilde{C}_b}$ (\Cref{sec-crystals}) & zero momentum $\Gamma_b$, cf.\ \cite{KR22}, where, in the magnetic field $\mathbf{B}=\mathrm{vol}$, the next band touches the lowest Landau level \\
$K_{\widetilde{C}_b}$, with the hyperbolic metric & uniform magnetic field $\mathbf{B}=\mathrm{vol}$, of $2$ flux quanta \\
$H^0(\mathcal{L}\otimes K_{\widetilde{C}_b})$ & lowest Landau level of momentum $\mathcal{L}$ \\
Baker--Akhiezer section $\widetilde{s}_i(\phi)$ and divisor $\widetilde{D}=\widetilde{p}_1+\widetilde{p}_2$ & lowest Landau level state and its two vortices \\
exceptional curve $|K_{\widetilde{C}_b}|\cong\mathbb{P}(T_{\mathrm{Hit}(b)}h^{-1}(b))$ of the blow-up & directions of approach to $\Gamma_b$, labeling the states $\pi^\ast s_p$ of the jumped level $H^0(K_{\widetilde{C}_b})$ \\
$\underset{t\to0}{\lim}t.(E,\phi)=\mathrm{Nm}(\widetilde{D})\in X_1\cong C^{(2)}$, for $\mathcal{L}\neq\mathcal{O}_{\widetilde{C}_b}$ & positions of the two vortices up to inversion \\
the four points in $h^{-1}(b)$ of the upward flow of a very stable Hodge bundle & the four momenta with the same vortex positions up to inversion \\
wobbly Hodge bundles, with divisor $D=2p$ & collision of the two vortex positions
\end{tabular}
\vspace{8pt} 
\caption{The dictionary between Higgs bundles and hyperbolic band theory: above the line, that of \cite{KR22}; below the line, its extension to magnetic fields and lowest Landau levels \cite{IL94} (\Cref{sec-degen,sec-BA}).}
\label{tab-dict}
\end{table}

\paragraph{Organization}
\Cref{sec-band} recalls band theory, magnetic fields, lowest Landau levels and the dictionary of \cite{KR22}.
\Cref{sec-BB} introduces the Baker--Akhiezer divisors, decomposes $\mathcal{M}$ into the upward flows of the fixed points (the \BB decomposition) and computes the nilpotent fiber.
\Cref{sec-degen} describes how the Baker--Akhiezer divisors degenerate under the $\mathbb{C}^\ast$-action and as the momentum limits to $\Gamma_b$.
\Cref{sec-BA} relates the Baker--Akhiezer sections to lowest Landau levels, to theta functions and to a four-dimensional crystal.

\section{Band theory and spectral curves}\label{sec-band}
\subsection{Preliminaries on band theory}
We recall the notions of band theory used in this paper.
Standard references are \cite{AM76, Kuc16}.
We use units in which the reduced Planck constant $\hbar$, the charge of the particle and twice its mass are equal to $1$.

\paragraph{Crystals and Bloch's theorem}
An electron in a crystal in $\mathbb{R}^d$ is described by the Hamiltonian
\[ H=-\Delta+V \quad\text{on } L^2(\mathbb{R}^d), \]
where the potential $V$ is invariant under a lattice $\Lambda\cong\mathbb{Z}^d$ of translations.
The translations $\psi(x)\mapsto\psi(x+\gamma)$, $\gamma\in\Lambda$, commute with $H$ and with each other.
Bloch's theorem states that the solutions of the eigenvalue equation $H\psi=\mathcal{E}\psi$ on $\mathbb{R}^d$ can be chosen to be \emph{Bloch functions}, i.e.\ functions $\psi_\chi$ with
\[ \psi_\chi(x+\gamma)=\chi(\gamma)\psi_\chi(x), \qquad \gamma\in\Lambda, \]
for a character $\chi\colon\Lambda\to U(1)$ \cite{AM76}.

Every character of $\Lambda$ is of the form
\[ \chi(\gamma)=e^{i\langle k,\gamma\rangle}, \qquad k\in(\mathbb{R}^d)^\ast. \]
Two covectors $k$ and $k'$ give the same character if and only if $k-k'\in2\pi\Lambda^\ast$, where $\Lambda^\ast=\{\ell\in(\mathbb{R}^d)^\ast:\ell(\Lambda)\subset\mathbb{Z}\}$ is the dual lattice.
The class of $k$ modulo the \emph{reciprocal lattice} $2\pi\Lambda^\ast$ is the \emph{crystal momentum}.
The crystal momenta form the \emph{Brillouin zone}
\[ \mathrm{BZ}=(\mathbb{R}^d)^\ast/2\pi\Lambda^\ast\cong\mathrm{Hom}(\Lambda,U(1)), \]
a torus of dimension $d$.
We also speak of the momentum $\chi$.
The zero momentum $\chi=1$ is denoted $\Gamma$.

The positions up to lattice translations form the torus $T=\mathbb{R}^d/\Lambda$, the \emph{unit cell} of the crystal.
Physicists often call a fundamental domain of $\Lambda$ the unit cell, but we use the term for the torus $T$, which is obtained by identifying the faces of a fundamental domain.
We do the same for hyperbolic crystals.
The Brillouin zone is the dual torus of $T$: as $\Lambda=\pi_1(T)$, it parametrizes the flat unitary line bundles on $T$.

\paragraph{Momentum sectors}
For $\chi\in\mathrm{BZ}$, let $L_\chi$ be the line bundle on $T$ whose total space is the quotient of the trivial line bundle $\mathbb{R}^d\times\mathbb{C}$ by the action of $\Lambda$ given by
\[ L_\chi=(\mathbb{R}^d\times\mathbb{C})/\Lambda, \qquad \gamma.(x,c)=(x+\gamma,\chi(\gamma)c). \]
A section of $L_\chi$ over $T$ is the same as a Bloch function with character $\chi$.
Since $\Lambda$ acts on the fibers by the constants $\chi(\gamma)$, the trivial connection $d$ of $\mathbb{R}^d\times\mathbb{C}$ descends to a flat connection $\nabla_\chi$ on $L_\chi$.
The connection $\nabla_\chi$ is unitary as $|\chi(\gamma)|=1$.
The Hamiltonian $H$ therefore induces the operator
\[ H_\chi=\nabla_\chi^\ast\nabla_\chi+V \quad\text{on } L^2(T,L_\chi). \]
We call the flat line bundle $L_\chi$ the \emph{momentum sector} of $\chi$.

\paragraph{Bands}
Since $T$ is compact, $H_\chi$ has discrete spectrum.
We write the eigenvalues of $H_\chi$, counted with multiplicity, and the eigenfunctions as
\[ H_\chi\psi_{n,\chi}=\mathcal{E}_n(\chi)\psi_{n,\chi}, \qquad \mathcal{E}_1(\chi)\leq\mathcal{E}_2(\chi)\leq\cdots, \]
labeled by the \emph{band index} $n\geq1$ and the momentum $\chi$.
For fixed $n$, the eigenvalue $\mathcal{E}_n(\chi)$ depends continuously on $\chi\in\mathrm{BZ}$.
The function $\mathcal{E}_n$ on $\mathrm{BZ}$ is the $n$-th \emph{band} \cite{Kuc16}.
When $\mathcal{E}_n(\chi)$ is a simple eigenvalue at every $\chi$, we also call the line bundle of the eigenlines $\mathbb{C}\psi_{n,\chi}$ over $\mathrm{BZ}$ the band.

\paragraph{Hyperbolic crystals}
Hyperbolic band theory is the band theory of lattices in the hyperbolic plane $\mathbb{H}$.
An example is the $\{8,8\}$ lattice, the tessellation of $\mathbb{H}$ by regular octagons, eight of them meeting at each vertex \cite{MR21}.
The translations of a hyperbolic lattice form a Fuchsian group $G\subset\PSL(2,\mathbb{R})$ acting freely and cocompactly on $\mathbb{H}$.
The unit cell is the compact Riemann surface $\Sigma=\mathbb{H}/G$ of some genus $g\geq2$, with the hyperbolic metric \cite{BGKMRT22}.

The momentum sectors of the crystal are labeled by the irreducible unitary representations of the group $G\cong\pi_1(\Sigma)$ \cite{MR22, KR22}.
Since $G$ is not abelian, some of these representations have dimension greater than one.
The decomposition of wave functions on $\mathbb{H}$ into momentum sectors is the hyperbolic Bloch transform \cite{NR24}.
The transform maps the wave functions to sections of flat unitary bundles on $\Sigma$ and intertwines the Laplacian of $\mathbb{H}$ with the covariant Laplacians of these bundles.
The sectors of the one-dimensional representations, the characters $\chi\colon G\to U(1)$, are called \emph{abelian}, whereas the others are called nonabelian \cite{MR21, KR22}.
We only use the abelian sectors.
A character factors through the abelianization $G/[G,G]=H_1(\Sigma,\mathbb{Z})$,
\[ \chi\in\mathrm{Hom}(G,U(1))=\mathrm{Hom}(H_1(\Sigma,\mathbb{Z}),U(1))\cong U(1)^{2g}, \]
so $\chi$ is determined by the phases picked up by the electron along the $2g$ cycles of a symplectic basis: the Aharonov--Bohm phases of magnetic fluxes threading the holes of the unit cell, with no magnetic field on the unit cell itself \cite{KR22}.
The abelian Bloch functions satisfy
\[ \psi_\chi(\gamma z)=\chi(\gamma)\psi_\chi(z), \qquad \gamma\in G. \]

Equivalently, the abelian Bloch functions with character $\chi$ are the sections of the flat unitary line bundle $L_\chi=(\mathbb{H}\times\mathbb{C})/G$ on $\Sigma$, where $\gamma.(z,c)=(\gamma z,\chi(\gamma)c)$.
We denote the flat connection of $L_\chi$ by $\nabla_\chi$.
The Hamiltonian $H$ induces the operator
\[ H_\chi=\nabla_\chi^\ast\nabla_\chi+V \quad\text{on } L^2(\Sigma,L_\chi), \]
where the adjoint is taken with respect to the hyperbolic metric, and the potential $V$ is a function on $\Sigma$ \cite{KR22}.
The holomorphic line bundle $\mathcal{L}=(L_\chi,\nabla_\chi^{0,1})$ has degree $0$, as $\nabla_\chi$ is flat.
Recall from the rank-one case of the theorem of Narasimhan--Seshadri that a holomorphic line bundle of degree $0$ carries a unique flat unitary connection whose $(0,1)$-part is its holomorphic structure.
It follows that $\chi\mapsto\mathcal{L}$ is a bijection from $\mathrm{Hom}(G,U(1))$ onto $\mathrm{Jac}(\Sigma)$.
So the abelian Brillouin zone is the Jacobian \cite{KR22}.
Since $\chi$ is recovered from $\mathcal{L}$ as the holonomy of its flat unitary connection, no information is lost by labeling the momentum by $\mathcal{L}$.

\paragraph{Genus-two crystals}
The unit cells of the crystals in this paper have genus two.
Identifying the opposite sides of one octagon of the $\{8,8\}$ lattice gives the unit cell of the lattice, the Bolza curve \cite{MR21}
\[ y^2=x^5-x. \]
The rotation by $\pi$ about the center of the octagon is a symmetry of the crystal, the spatial inversion.
The quotient map by the inversion is the double cover $x\colon\Sigma\to\mathbb{P}^1$, branched at the images $0,\pm1,\pm i,\infty$ of the six fixed points of the inversion \cite{KR22}.
In the notation of the introduction, $\Sigma\cong\widetilde{C}_b$ for $b=x-x^5$, with $y=\lambda$.
We keep the letter $\Sigma$ for the unit cell of a crystal of any genus, and write $\widetilde{C}_b$ when the unit cell is viewed as a spectral curve.

Every compact Riemann surface of genus two is hyperelliptic: it is a double cover of $\mathbb{P}^1$ branched at six points, with an equation expressing $y^2$ as a polynomial in $x$ of degree six, or five if $\infty$ is a branch point.
Every such curve is the unit cell of a crystal with inversion symmetry \cite{KR22}.

\subsection{Magnetic fields and Landau levels}\label{sec-fields}
A magnetic field on a Riemannian surface $Y$ is a closed $2$-form $\mathbf{B}$, which enters the Hamiltonian through the minimal coupling
\[ \nabla=d-i\mathbf{A}, \qquad d\mathbf{A}=\mathbf{B}, \qquad H=\nabla^\ast\nabla+V. \]
Globally, the wave functions are the sections of a Hermitian line bundle $L$ on $Y$ with a unitary connection $\nabla$ of curvature $-i\mathbf{B}$.
In a local unitary frame, $\nabla=d-i\mathbf{A}$.
If $Y$ is closed, the number of \emph{flux quanta}
\[ N_\Phi=\frac{1}{2\pi}\int_{Y}\mathbf{B}=\deg L \]
is an integer  by Chern--Weil theory.
In physics, this integrality is Dirac's quantization of the magnetic flux.

Two unitary connections with the same curvature differ by a closed $1$-form, so the magnetic field determines $(L,\nabla)$ only up to the tensor product with a flat unitary line bundle $L_\chi$, whose holonomies $\chi$ are the crystal momentum.
We fix one such pair, $(L_{\mathbf{B}},\nabla_{\mathbf{B}})$, the \emph{line bundle of the magnetic field}.
The wave functions of momentum $\chi$ in the magnetic field are then the sections of the tensor product of Hermitian line bundles with unitary connections
\[ L=L_{\mathbf{B}}\otimes L_\chi, \qquad \nabla=\nabla_{\mathbf{B}}\otimes1+1\otimes\nabla_\chi. \]
The bundle $L$ varies with $\chi$, whereas $L_{\mathbf{B}}$ does not.
At zero magnetic field, $L_{\mathbf{B}}=\mathcal{O}$, so $L=L_\chi$ is the momentum sector of $\chi$, cf.\ \cite{KR22}.

\paragraph{Landau levels}
If the magnetic field is uniform, i.e.\ $\mathbf{B}=B_0\,\mathrm{vol}$ for a constant $B_0>0$, the Hamiltonian $\nabla^\ast\nabla$ without potential has explicit eigenvalues, the \emph{Landau levels}.
In the plane they are $B_0(2n+1)$, $n\geq0$, whereas in the hyperbolic plane $\mathbb{H}$ they are $B_0(2n+1)-n(n+1)$ for the integers $0\leq n<B_0-\frac12$, below the continuous spectrum $[B_0^2+\frac14,\infty)$ \cite{Com87}.
On a closed hyperbolic surface they are eigenvalues of finite multiplicity \cite{IL94, CL25}.
We only use the lowest one, of energy $B_0$.
On $\mathbb{H}$, this level exists if and only if $B_0>\frac12$.

The Weitzenb\"ock formula \cite{IL94, ASZ94}
\[ \nabla^\ast\nabla=2\bar\partial^\ast\bar\partial+B_0, \]
where $\bar\partial=\nabla^{0,1}$ is the $(0,1)$-part of $\nabla$, gives
\[ \langle\psi,\nabla^\ast\nabla\psi\rangle=2\|\bar\partial\psi\|^2+B_0\|\psi\|^2\geq B_0\|\psi\|^2, \]
with equality if and only if $\bar\partial\psi=0$.
On a Riemann surface the $(0,1)$-part of every connection is integrable, so $\bar\partial$ is a holomorphic structure on $L$.
Conversely, a Hermitian holomorphic line bundle determines its Chern connection, the unique unitary connection whose $(0,1)$-part is the holomorphic structure.
The holomorphic line bundle $(L,\nabla^{0,1})$ is the tensor product of the holomorphic line bundles $(L_{\mathbf{B}},\nabla_{\mathbf{B}}^{0,1})$ and $(L_\chi,\nabla_\chi^{0,1})$.
It follows from the Weitzenb\"ock formula that $B_0$ is an eigenvalue of $\nabla^\ast\nabla$ if and only if $L$ has a nonzero holomorphic section, and $B_0$ is then the lowest eigenvalue.
The eigenspace of $B_0$, the \emph{lowest Landau level} of momentum $\chi$, is the space of holomorphic sections \cite{IL94, Kle16}
\begin{equation}\label{eq-LLL}
H^0(Y,L_{\mathbf{B}}\otimes L_\chi).
\end{equation}
Otherwise every eigenvalue exceeds $B_0$, so the lowest Landau level is zero.

In a local holomorphic frame $e$ of $L$, a section $\psi=fe$ lies in the lowest Landau level if and only if the function $f$ is holomorphic.
For instance, for the magnetic field $\mathbf{B}=B_0\,dx\wedge dy$ on $\mathbb{C}$ and the frame with $\|e\|^2=e^{-B_0|z|^2/2}$, the sections $fe$ are the familiar wave functions $f(z)\,e^{-B_0|z|^2/4}$ of the lowest Landau level in the plane, written in a unitary frame.

The description \eqref{eq-LLL} of the lowest Landau level also holds on a K\"ahler manifold, for a magnetic field of type $(1,1)$ whose contraction with the K\"ahler form is a constant $B_0$ \cite{DK10}.

The zeros of a lowest Landau level state are called \emph{vortices}, as the phase of the wave function winds around them.

\paragraph{The magnetic field of the hyperbolic area form}
We use \eqref{eq-LLL} on the unit cell $\Sigma=\mathbb{H}/G$ of a hyperbolic crystal.
The canonical bundle $K_\Sigma$, with the Hermitian metric induced by the hyperbolic metric and the Chern connection, is the line bundle of a uniform magnetic field.
Indeed, in a local coordinate $z=x+iy$ in which the metric is $\rho\,|dz|^2$, the induced norm of $dz$ is $\|dz\|^2=2\rho^{-1}$, so the curvature of the Chern connection of $K_\Sigma$ is
\[ \bar\partial\partial\log\rho^{-1}=\partial_z\partial_{\bar z}\log\rho\;dz\wedge d\bar z=i\kappa\rho\,dx\wedge dy, \qquad \kappa=-\frac{2}{\rho}\,\partial_z\partial_{\bar z}\log\rho, \]
where $\kappa$ is the Gaussian curvature of the metric.
For the hyperbolic metric, $\kappa=-1$, e.g.\ $\rho=(\mathrm{Im}\,z)^{-2}$ on $\mathbb{H}$, so the curvature of the Chern connection is $-i\mathbf{B}$ with
\[ \mathbf{B}=\mathrm{vol}, \qquad \mathrm{vol}=\rho\,dx\wedge dy=\tfrac i2\rho\,dz\wedge d\bar z, \]
the hyperbolic area form.
Hence $L_{\mathbf{B}}=K_\Sigma$ is the line bundle of a uniform magnetic field of strength $B_0=1$.
The number of flux quanta is
\[ N_\Phi=\frac{1}{2\pi}\int_\Sigma\mathrm{vol}=2g-2=\deg K_\Sigma, \]
the area being $4\pi(g-1)$ by the Gauss--Bonnet theorem.

It follows from \eqref{eq-LLL} that the lowest Landau level of momentum $\mathcal{L}$ is $H^0(\Sigma,\mathcal{L}\otimes K_\Sigma)$.
If $\mathcal{L}\neq\mathcal{O}_\Sigma$, then $h^1(\mathcal{L}\otimes K_\Sigma)=h^0(\mathcal{L}^{-1})=0$, so the lowest Landau level has dimension $g-1$ by the Riemann--Roch theorem.
At the zero momentum $\Gamma$, the lowest Landau level jumps to $H^0(\Sigma,K_\Sigma)$, of dimension $g$.
This jump was found by Iengo and Li: in the magnetic field $L_{\mathbf{B}}=K_\Sigma$, the lowest Landau level has dimension $g$ for untwisted and $g-1$ for twisted boundary conditions \cite{IL94}.
The twist corresponds to magnetic flux through the handles of the unit cell, i.e.\ to a nonzero momentum.

The flux of the magnetic field $L_{\mathbf{B}}=K_\Sigma$ is the borderline one: for $N_\Phi>2g-2$ the lowest Landau level has the same dimension $N_\Phi+1-g$ at every momentum, as then $H^1(\Sigma,L)=0$ \cite{IL94}.
Studies of the Hall conductance as an adiabatic curvature on closed Riemann surfaces assume a flux above $2g-2$, so that the lowest Landau levels form a vector bundle over the Brillouin zone \cite{ASZ94, KMMW17}.

For the Hamiltonian without potential, the rest of the spectrum at momentum $\chi$ in the magnetic field $L_{\mathbf{B}}=K_\Sigma$ is the nonzero spectrum at momentum $\chi$ and zero magnetic field, shifted by $1$.
At the zero momentum, the shift is shown in \cite{IL94}, cf.\ \cite{CL25}.
The argument of \cite{IL94} applies at every momentum.
Namely, on sections of $K_\Sigma\otimes L_\chi$, i.e.\ $(1,0)$-forms with values in $L_\chi$, the Weitzenb\"ock formula and the K\"ahler identities give $\nabla^\ast\nabla=2\partial_\chi\partial_\chi^\ast+1$, where $\partial_\chi=\nabla_\chi^{1,0}$, while at zero magnetic field $\nabla_\chi^\ast\nabla_\chi=2\partial_\chi^\ast\partial_\chi$.
Therefore $\partial_\chi$ maps the eigenfunctions of energy $\mu>0$ at zero magnetic field onto the eigenfunctions of energy $1+\mu$ in the magnetic field $L_{\mathbf{B}}=K_\Sigma$.
The eigenfunctions of energy $1+\mu$, for all $\mu>0$, span the orthogonal complement of the lowest Landau level.
Hence the jump at $\Gamma$ is a band touching: the next band in the magnetic field $L_{\mathbf{B}}=K_\Sigma$ is the lowest band at zero magnetic field, shifted by $1$, so the next band meets the lowest Landau level exactly at $\Gamma$, where the lowest band at zero magnetic field has energy $0$ (\Cref{fig-band-touching}).

\subsection{Higgs bundles as hyperbolic crystals and momenta}\label{sec-crystals}
Following \cite{KR22}, we now summarize the relation between Higgs bundles and hyperbolic band theory and describe the setting of our paper.

Throughout the paper, the base curve is $C=\mathbb{P}^1$.
We consider Higgs bundles $(E,\phi)$ of rank two on $C$ with trace-free Higgs fields $\phi\colon E\to E\otimes\mathcal{O}(3)$.
The corresponding \emph{spectral curves} are
\[ \widetilde{C}_b=\{\lambda^2+b=0\}\subset\mathrm{Tot}(\mathcal{O}(3)), \qquad b=\det\phi, \]
where $\pi\colon\mathrm{Tot}(\mathcal{O}(3))\to\mathbb{P}^1$ is the projection and $\lambda$ the tautological section of $\pi^\ast\mathcal{O}(3)$.
The determinant $b=\det\phi$ is a section of $\mathcal{O}(3)^{\otimes2}=\mathcal{O}(6)$ and 
hence has six zeros counted with multiplicity if $b\neq0$.
In the generic case these zeros are all simple, so $\widetilde{C}_b$ is the smooth double cover of $\mathbb{P}^1$ branched at these six points and has genus two.

Let $\mathcal{M}$ be the moduli space of stable Higgs bundles 
$(E, \phi)$ with $\det(E) = \mathcal{O}(-3)$.
The \emph{Hitchin map} is the morphism
\[ h: \mathcal{M} \longrightarrow B=H^0(\mathcal{O}(6)) \cong \mathbb{C}^7, \qquad \qquad (E,\phi) \longmapsto\det\phi. \]
The fiber $h^{-1}(b)$ over a section $b \in B$ with simple zeros is canonically identified with the Jacobian $\mathrm{Jac}(\widetilde{C}_b)$ via the spectral correspondence \cite{BNR89}:
given $(E, \phi)$ with $\det(\phi) = b$, 
we define the associated spectral line bundle 
\[ \mathcal{L}_{(E, \phi)} \coloneqq \mathrm{coker}\left( \pi^\ast(E \otimes \mathcal{O}(-3)) \overset{\pi^\ast(\phi) - \lambda}{\longrightarrow} \pi^\ast(E)  \right) \in \mathrm{Jac}(\widetilde{C}_b) \]
and recover $(E, \phi)$ by taking the direct image,
\[ (E, \phi) = \left( \pi_\ast \mathcal{L}_{(E,\phi)}, \pi_\ast \lambda \right). \]

\paragraph{Smoothness}
Since the rank and the degree of $E$ are coprime, 
$\mathcal{M}$ is a smooth quasi-projective variety.
The moduli space $\mathcal{M}$ has dimension $9$: the Hitchin base has dimension $7$, and the Hitchin fibers have dimension $2$.

\begin{remark}[Poisson structures]\label{rem-poisson-str}
As $\dim\mathcal{M}=9$ is odd, $\mathcal{M}$ carries no holomorphic symplectic form, unlike the moduli spaces of $K$-twisted Higgs bundles \cite{Hit87a}; $\mathcal{M}$ does, however, carry Poisson structures \cite{Bottacin95, Markman94}.
\end{remark}

\paragraph{Hitchin section and zero momenta}
It follows from Grothendieck's theorem and the stability of the Higgs bundles that the underlying bundles $E$ are of two splitting types.
As $h^0(E)=h^0(\mathcal{L})$, the splitting type is determined by $\mathcal{L}$:
\[
	E \cong
	\begin{cases}
		\mathcal{O}\oplus\mathcal{O}(-3) \cong \pi_\ast \mathcal{O}_{\widetilde{C}_b}
		&\text{ for } \mathcal{L} \cong \mathcal{O}_{\widetilde{C}_b} \\
		\mathcal{O}(-1)\oplus\mathcal{O}(-2) 
		&\text{ for } \mathcal{L} \ncong \mathcal{O}_{\widetilde{C}_b}
	\end{cases}.
\]
A Higgs bundle $(E, \phi) \in h^{-1}(b)$ has a maximally unstable underlying bundle if and only if the spectral line bundle is $\mathcal{O}_{\widetilde{C}_b} \equiv 0 \in \mathrm{Jac}(\widetilde{C}_b)$,
i.e.\ the zero momentum in band theory language, 
\[ \Gamma_b=\mathcal{O}_{\widetilde{C}_b} \equiv 0 
\in \mathrm{Jac}(\widetilde{C}_b) \cong h^{-1}(b). \]
The Hitchin section is
\[ \mathrm{Hit}: B \longrightarrow \mathcal{M}, \qquad \qquad b \longmapsto \mathcal{O}_{\widetilde{C}_b} \in \mathrm{Jac}(\widetilde{C}_b) \cong h^{-1}(b), \]
which is, in terms of Higgs bundles, 
\begin{equation} \label{Hitchin-section}
	\mathrm{Hit}(b) = \left(E_0, \phi_b \right) \coloneqq  
	\left( \mathcal{O}\oplus\mathcal{O}(-3),\begin{pmatrix} 0 & -b \\ 1 & 0 \end{pmatrix} \right).
\end{equation}

In band theory, the Hitchin section associates to a crystal with unit cell $\widetilde{C}_b$ 
the zero momentum in the Brillouin zone. 

\section{Baker--Akhiezer divisors and the \BB stratification}
\label{sec-BB}

\subsection{Baker--Akhiezer divisors}
\label{sec-BAdiv}
As in \Cref{sec-crystals}, $\pi\colon\widetilde{C}_b\to C$ is the projection to the base curve $C=\mathbb{P}^1$.
Let $B_s \subset B$ be the open dense subset of sextics $b$ with simple zeros, i.e.\ with smooth spectral curve $\widetilde{C}_b$.
Let $(E,\phi)\in h^{-1}(b)$ with $b \in B_s$, 
and let $i\colon L\to E$ be a nonzero morphism from a line bundle $L$ on $C$.
Given such a triple $(L\xrightarrow{i}E,\phi)$,
on $\widetilde{C}_b$ consider the composition  
\begin{equation} \label{eqn-BA-section}
	\widetilde{s}_i(\phi): \pi^\ast L\xrightarrow{\ \pi^\ast i\ }\pi^\ast E\longrightarrow\mathcal{L}_{(E,\phi)},	
\end{equation}
where the second map is the quotient map in the definition of $\mathcal{L}_{(E,\phi)}$.

\begin{definition}\label{def-BA}
We say that the composition $\widetilde{s}_i(\phi)$,
regarded as a section of $\mathcal{L}_{(E,\phi)}\otimes\pi^\ast L^{-1}$,
and its zero divisor are respectively
the \emph{Baker--Akhiezer section} and \emph{Baker--Akhiezer divisor} of the triple $(L \xrightarrow{i} E,\phi)$.
\end{definition}

We denote the Baker--Akhiezer divisor by $\widetilde{D}_i(\phi)$, 
or more frequently by $\widetilde{D}$ if there is no confusion regarding the triple $(L \xrightarrow{i} E,\phi)$.
The morphism $i: L \rightarrow E$ corresponds to the Baker--Akhiezer section $\widetilde{s}_i(\phi)$ by adjunction.
The terminology \emph{Baker--Akhiezer section} and \emph{Baker--Akhiezer divisor}
was introduced in \cite{DT23}, after the Baker--Akhiezer functions of integrable systems \cite{Kri77, Dub81}.

The norm 
\[ D \equiv D_i(\phi) \coloneqq \mathrm{Nm}(\widetilde{D})=\sum_k\pi(\widetilde{p}_k)\]
of $\widetilde{D}=\sum_k\widetilde{p}_k $ is the divisor of the composition
\[ s_i(\phi) : 
L \overset{i}{\rightarrow} E \overset{\phi}{\rightarrow} E \otimes \mathcal{O}(3) \rightarrow L^{-1} \otimes \det(E) \otimes \mathcal{O}(3) \cong L^{-1}  \]
regarded as a section of $L^{-2}$.
Concretely, if $i$ is an embedding, i.e.\ nowhere vanishing, and 
if $\phi$ takes the local form 
\[ \phi = \begin{pmatrix} \phi_0 & \phi_- \\ \phi_+ & -\phi_0\end{pmatrix} \]
in the local frames adapted to the line subbundle $i(L)$, 
then $\phi_+$ represents $s_i(\phi)$ locally. 

\begin{remark} \label{rem-morphism-vanishes}
For $p\in C$, the divisor $\widetilde{D}_i(\phi)$ has $\pi^\ast(p)$ as a subdivisor if and only if 
the morphism $L \xrightarrow{i} E$ vanishes at $p$.
\end{remark}

\begin{remark}\label{rem-BA-concrete}
Let $i: L \hookrightarrow E$ be an embedding, and 
\( \phi= \left( \begin{smallmatrix}\phi_0&\phi_-\\\phi_+&-\phi_0\end{smallmatrix} \right) \)
in local frames adapted to $i(L)$.
If $D = \mathrm{div}\,\phi_+=\sum_kp_k$, then $\widetilde{D}=\sum_k\widetilde{p}_k$, where $\widetilde{p}_k=(p_k,-\phi_0(p_k))$ \cite{DT23}.
\end{remark}

By construction, the Baker--Akhiezer divisor $\widetilde{D}$ of $(L \overset{i}{\rightarrow} E, \phi)$ satisfies 
\[ \mathcal{L}_{(E, \phi)} \cong \pi^\ast(L)(\widetilde{D}). \]
Since
$ K_{\widetilde{C}_b} \cong \pi^\ast(\mathcal{O}(1))$,
every $\mathcal{L} \in \mathrm{Jac}(\widetilde{C}_b)\setminus \{0\}$ satisfies 
\begin{equation} \label{eq-LD}
\mathcal{L} \cong K_{\widetilde{C}_b}^{-1}(\widetilde{D})
\end{equation}
where $\widetilde{D}$ is the Baker--Akhiezer divisor of degree $2$ associated to the triple
\[ \left( \mathcal{O}(-1) \hookrightarrow \pi_\ast \mathcal{L}, \pi_\ast \lambda \right), \qquad \qquad \pi_\ast \mathcal{L} \cong \mathcal{O}(-1) \oplus \mathcal{O}(-2). \]

\paragraph{Linear systems of Baker--Akhiezer divisors}\label{sec-linsys}
Let $b\in B_s$, 
and $(E, \phi)\in h^{-1}(b)$ with spectral line bundle $\mathcal{L}$.
For $k \leq 0$, the Baker--Akhiezer sections 
\[ \widetilde{s}_{i}(\phi): \pi^\ast(\mathcal{O}(k)) \longrightarrow \pi^\ast(E) \longrightarrow \mathcal{L} \]
correspond by adjunction to the morphisms $\mathcal{O}(k) \xrightarrow{i} E$,
\[ \mathrm{Hom}(\mathcal{O}(k),E)=H^0(\mathbb{P}^1,E(-k))\cong H^0(\widetilde{C}_b,\mathcal{L}\otimes K_{\widetilde{C}_b}^{-k}), \qquad i\longmapsto \widetilde{s}_i(\phi). \]
For $k \leq -1$, the linear system $|\mathcal{L} \otimes K_{\widetilde{C}_b}^{-k} |$ 
of Baker--Akhiezer divisors is nonempty and has dimension 
\begin{equation}\label{eq-dimlinsys}
	\dim |\mathcal{L} \otimes K_{\widetilde{C}_b}^{-k} |
	= 
	\begin{cases}
		-2k-2 &\text{for } k \leq -2 \\
		0 &\text{for } k = -1, \mathcal{L} \neq \mathcal{O}_{\widetilde{C}_b} \\
		1 &\text{for } k = -1, \mathcal{L} = \mathcal{O}_{\widetilde{C}_b}.
	\end{cases}
\end{equation}
For $k = 0$, $|\mathcal{L} \otimes K_{\widetilde{C}_b}^{-k} | = | \mathcal{L} |$ is empty except for $\mathcal{L} = \mathcal{O}_{\widetilde{C}_b}$.

Thus the only nontrivial Baker--Akhiezer divisors whose linear system is zero-dimensional are those in \eqref{eq-LD}, 
i.e.\ those associated to a triple $(L \xrightarrow{i} E, \phi)$ with 
$E = \mathcal{O}(-1) \oplus \mathcal{O}(-2)$ and $L = \mathcal{O}(-1)$.
From now on, we call such a divisor the Baker--Akhiezer divisor of the Higgs bundle $(E,\phi)$.

On the other hand, the Baker--Akhiezer divisors of the triples 
\[ \left( \mathcal{O} \hookrightarrow E_0, \phi_b \right),
\qquad \qquad \left( \mathcal{O}(-3) \hookrightarrow E_0, \phi_b \right), \]
where $(E_0, \phi_b) = \mathrm{Hit}(b)$,
are respectively the trivial divisor and ramification divisor on $\widetilde{C}_b$.

\subsection{\BB stratification}
\label{sec-BBdec}

\paragraph{$\mathbb{C}^\ast$-action}
The moduli space of Higgs bundles admits a natural $\mathbb{C}^\ast$-action via 
$ t.[E,\phi]= [E,t\phi] $.
As the Hitchin map is proper \cite{Nitsure} and the $\mathbb{C}^\ast$-action on the Hitchin base $B$ is given by 
$t. b = t^2 b$, we have the following facts.

\begin{proposition}\label{lem-semiprojective}
\begin{enumerate}[label=(\roman*)]
\item For every $x\in\mathcal{M}$ the limit $\underset{t\to0}{\lim}t.x$ exists in $\mathcal{M}$, and it is a fixed point of the $\mathbb{C}^\ast$-action.
\item The fixed point locus $\mathcal{M}^{\mathbb{C}^\ast}$ is a closed subvariety of the nilpotent fiber $\mathcal{N}=h^{-1}(0)$, and $\mathcal{N}$ is a projective variety.
\item For $x\in\mathcal{M}$, the limit $\underset{t\to\infty}{\lim}t.x$ exists in $\mathcal{M}$ if and only if $x\in\mathcal{N}$.
\end{enumerate}
\end{proposition}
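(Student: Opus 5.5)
The plan is to deduce everything from two inputs: the properness of the Hitchin map $h$ \cite{Nitsure}, and its $\mathbb{C}^\ast$-equivariance, where $\mathbb{C}^\ast$ acts on $B$ by $t.b=t^2b$. Since $\det(t\phi)=t^2\det\phi$, we have $h(t.x)=t^2h(x)$.

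\emph{Proof of (i).} Fix $x\in\mathcal{M}$ and consider the orbit map $\mathbb{C}^\ast\to\mathcal{M}$, $t\mapsto t.x$. Its image under $h$ is $t\mapsto t^2h(x)$, which extends to a morphism $\mathbb{C}\to B$ sending $0\mapsto 0$. The orbit map is therefore a morphism over $B$ from $\mathbb{C}^\ast\subset\mathbb{C}$. By the valuative criterion of properness for $h$, applied to the local ring of $\mathbb{C}$ at $0$ (a DVR), the orbit map extends uniquely across $t=0$. This gives the limit $x_0=\lim_{t\to0}t.x$.

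The limit $x_0$ is fixed. Indeed, for $s\in\mathbb{C}^\ast$ we have $s.x_0=\lim_{t\to0}(st).x=x_0$, using continuity of the action and the fact that $\mathcal{M}$ is separated, so limits are unique.

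\emph{Proof of (ii).} If $x$ is fixed, then $h(x)=h(t.x)=t^2h(x)$ for all $t\in\mathbb{C}^\ast$, so $h(x)=0$. Hence $\mathcal{M}^{\mathbb{C}^\ast}\subset\mathcal{N}$. The fixed locus of an algebraic group action on a separated variety is closed, so $\mathcal{M}^{\mathbb{C}^\ast}$ is a closed subvariety.

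For projectivity of $\mathcal{N}$: it is proper over the point $0$, being a fiber of the proper map $h$. Since $\mathcal{M}$ is quasi-projective, $\mathcal{N}$ is a closed subscheme of a quasi-projective variety that is also proper. Such a scheme is projective, because a proper locally closed subscheme of projective space is closed in it.

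\emph{Proof of (iii).} Suppose the limit $y=\lim_{t\to\infty}t.x$ exists. Then by continuity $h(y)=\lim_{t\to\infty}t^2h(x)$. This limit exists in $B\cong\mathbb{C}^7$ only if $h(x)=0$, so $x\in\mathcal{N}$.

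Conversely, suppose $x\in\mathcal{N}$. Then the whole orbit $\mathbb{C}^\ast.x$ lies in $\mathcal{N}$, since $h(t.x)=t^2\cdot 0=0$. The orbit map $\mathbb{C}^\ast\to\mathcal{N}$ extends across $\infty\in\mathbb{P}^1$ by the valuative criterion, because $\mathcal{N}$ is proper by (ii). This gives the limit as $t\to\infty$.

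Main obstacle. The only point needing care is the properness of $h$, including the fact that $\mathcal{M}$ is quasi-projective for $\mathcal{O}(3)$-twisted Higgs bundles. Both are supplied by Nitsure's construction, which covers arbitrary twisting line bundles on the curve. Beyond that, the argument is the standard semiprojectivity argument.
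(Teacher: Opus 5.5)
Your proposal is correct and follows essentially the same route as the paper. The paper simply asserts that all three parts follow from the properness of $h$ \cite{Nitsure} together with the weight-two equivariance $h(t.x)=t^2h(x)$, and your valuative-criterion arguments supply exactly the details it leaves implicit.
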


It follows from (i) and (ii) that $\mathcal{M}$ is \emph{semiprojective} in the sense of \cite{HRV15, HH21}: a quasi-projective variety with a $\mathbb{C}^\ast$-action is semiprojective if its fixed point locus is proper and $\underset{t\to0}{\lim}t.x$ exists for every $x$.

\paragraph{Flows associated to Hodge bundles}
We say that a fixed point $\alpha \in \mathcal{M}^{\mathbb{C}^\ast}$ is a \emph{Hodge bundle}.
Let $X$ be a connected component of $\mathcal{M}^{\mathbb{C}^\ast}$.
The component $X$ is smooth and projective.
At $\alpha \in X$, the tangent space decomposes into the subspaces of positive, zero and negative weights,
\[ T_\alpha\mathcal{M}=T^+_\alpha\oplus T^0_\alpha\oplus T^-_\alpha, \qquad T^0_\alpha=T_\alpha X. \]
We define the \emph{upward} and \emph{downward flows} of $\alpha$ to be, respectively,  
\[ \alpha^+=\{x\in\mathcal{M}\mid \underset{t\to0}{\lim}t.x=\alpha\}, \qquad \alpha^-=\{x\in\mathcal{M}\mid \underset{t\to\infty}{\lim}t.x=\alpha\},   \]
and the \emph{attractor} and \emph{repeller} of $X$ to be, respectively,
\[ X^\pm=\bigcup_{\alpha\in X}\alpha^\pm. \]

\begin{theorem}[\BB \cite{BB73}, {\cite[Prop.~2.1, Rem.~2.8]{HH21}}]\label{thm-BB}
Let $X$ be a connected component of $\mathcal{M}^{\mathbb{C}^\ast}$, let $\alpha\in X$, and let $n^\pm=\dim T^\pm_\alpha$.
\begin{enumerate}[label=(\roman*)]
\item The flows $\alpha^\pm$ are $\mathbb{C}^\ast$-invariant locally closed subvarieties of $\mathcal{M}$, $\mathbb{C}^\ast$-equivariantly isomorphic to $T^\pm_\alpha$. In particular, $\alpha^+$ is an affine space $\mathbb{C}^{n^+}$ on which $\mathbb{C}^\ast$ acts linearly with positive weights.
\item The limit map $X^+\to X$ is a $\mathbb{C}^\ast$-equivariant locally trivial fibration whose fibers are affine spaces of dimension $n^+$.
\item $\mathcal{M}=\bigsqcup_XX^+$ is the \emph{\BB decomposition}, and $\mathcal{N}=\bigsqcup_XX^-$ with its reduced structure, i.e.\ $\mathcal{N}$ is the \emph{core} of $\mathcal{M}$ \cite[Prop.~3.9]{HH21}.
\end{enumerate}
\end{theorem}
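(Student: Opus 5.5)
The statement is the \BB theorem, quoted from \cite{BB73} and \cite[Prop.~2.1, Rem.~2.8]{HH21}. I will not reprove it from scratch. Instead I will check that its hypotheses hold for $\mathcal{M}$ and then specialise the general results. The hypotheses are: $\mathcal{M}$ is smooth and quasi-projective, it carries an algebraic $\mathbb{C}^\ast$-action, and it is semiprojective. Smoothness and quasi-projectivity hold because rank and degree are coprime (\Cref{sec-crystals}). The action $t.(E,\phi)=(E,t\phi)$ is algebraic. Semiprojectivity is the consequence of \Cref{lem-semiprojective}(i),(ii) recorded just after it. Each fixed component $X$ is a connected component of the fixed locus of a torus acting on a smooth variety, so it is smooth. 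It is projective because it is closed in the projective variety $\mathcal{N}$.

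For (i), I will first linearise the action at $\alpha$. The torus is linearly reductive, so after $\mathbb{C}^\ast$-equivariantly embedding a neighbourhood of $\alpha$ one obtains the weight decomposition $T_\alpha\mathcal{M}=T^+_\alpha\oplus T^0_\alpha\oplus T^-_\alpha$. I will then apply \BB's theorem on a smooth variety: the plus-cell of an isolated orbit-structure is locally closed and equivariantly isomorphic to $T^+_\alpha$. The minus-cell is handled the same way by inverting the action $t\mapsto t^{-1}$.

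The main obstacle is that \BB's original theorem is stated for complete varieties, while $\mathcal{M}$ is only quasi-projective. Semiprojectivity is exactly what replaces completeness. For the upward flows, \Cref{lem-semiprojective}(i) guarantees that every point has a limit as $t\to0$, so the $\alpha^+$ cover $\mathcal{M}$. One then argues as in \cite[\S1]{HRV15}: take an equivariant projective completion $\overline{\mathcal{M}}$, apply \BB there, and intersect with $\mathcal{M}$. The cells meeting $\mathcal{M}$ remain affine spaces, because a plus-cell whose fixed point lies in $\mathcal{M}$ is contained in $\mathcal{M}$ by properness of the limit. This is the step I would cite from \cite[Prop.~2.1]{HH21} rather than redo.

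For (ii), local triviality of $X^+\to X$ is the relative version of the same linearisation. It uses the weight decomposition of the normal bundle of $X$, which has constant rank on the connected $X$. For (iii), the decomposition $\mathcal{M}=\bigsqcup_X X^+$ follows from (i) and the existence of limits. For the core, \Cref{lem-semiprojective}(iii) says $\lim_{t\to\infty}t.x$ exists exactly when $x\in\mathcal{N}$. Hence $\mathcal{N}=\bigsqcup_X X^-$ as sets, and therefore with the reduced structure, as in \cite[Prop.~3.9]{HH21}.
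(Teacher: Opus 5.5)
Your proposal is correct and follows the paper's route: the paper also cites \cite{HH21} for (i) and (ii), which hold for normal semiprojective varieties such as $\mathcal{M}$ by \Cref{lem-semiprojective}, and it deduces (iii) from \Cref{lem-semiprojective} exactly as in your last paragraph. In your optional sketch of the completion argument, two points need fixing. A plus-cell of $\overline{\mathcal{M}}$ whose fixed point lies in $\mathcal{M}$ is contained in $\mathcal{M}$ because $\mathcal{M}$ is open and $\mathbb{C}^\ast$-invariant in $\overline{\mathcal{M}}$, not ``by properness of the limit''. Also, before invoking \cite{BB73} you should replace $\overline{\mathcal{M}}$ by an equivariant resolution that is an isomorphism over $\mathcal{M}$.
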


The proof of parts (i) and (ii) can be found in \cite{HH21} for normal semiprojective varieties, 
while part (iii) follows from \Cref{lem-semiprojective}.

\paragraph{\BB stratification of $\mathcal{M}$}
The following result classifies Hodge bundles in our setting.
The idea of the proof is that a \emph{stable} \emph{nilpotent} Hodge bundle 
$(E, \phi_n)$ with $E = \mathcal{O}(k) \oplus \mathcal{O}(-k-3)$ and 
$\mathcal{O}(k)$ destabilizing, i.e.\ $2k > - 3$, 
has $\ker(\phi_n) = \mathcal{O}(-k-3)$,
as otherwise $(E, \phi_n)$ is unstable. 
We refer to \cite{HH21} for the classification of $K$-twisted Hodge bundles of higher rank on curves of genus $g\geq2$. 

\begin{proposition}\label{prop-fixed}
The fixed point locus has two components $\mathcal{M}^{\mathbb{C}^\ast}=X_1\sqcup X_0$ 
where 
\[ X_1 \cong C^{(2)}, \qquad \qquad X_0 = \{ \ast \}. \]
Here $X_1 \cong C^{(2)}$ parametrizes Higgs bundles of the form
\[ \left( E_1, \phi_D \right) \coloneqq \left(\mathcal{O}(-1)\oplus\mathcal{O}(-2),\begin{pmatrix}0&0 \\ s_D&0\end{pmatrix}\right), \qquad D \in C^{(2)}, \]
where $s_D = s_{\mathcal{O}(-1) \hookrightarrow E_1}(\phi_D) \in H^0(\mathcal{O}(2))$ vanishes at $D$, 
and $X_0$ contains the single \emph{uniformizing} Higgs bundle
\[ \left(E_0, \phi_0 \right) = \left(\mathcal{O}\oplus\mathcal{O}(-3),\begin{pmatrix}0&0\\1&0\end{pmatrix}\right). \]
\end{proposition}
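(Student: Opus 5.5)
The plan is to classify the fixed points directly, using the standard characterization of $\mathbb{C}^\ast$-fixed stable Higgs bundles. A stable $(E,\phi)$ is fixed exactly when $(E,\phi)\cong(E,t\phi)$ for all $t$. Stability makes the automorphism group scalar, which yields a $\mathbb{C}^\ast$-action on $E$ lifting the action on $\mathcal{M}$. The weight decomposition of $E$ under this action then gives a splitting $E=\bigoplus_j L_j$ with $\phi(L_j)\subset L_{j+1}\otimes\mathcal{O}(3)$. This is the Simpson argument, and we take it as standard.

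First, if $\phi=0$, the fixed point would be a stable bundle $E$ of rank two and degree $-3$ on $\mathbb{P}^1$. Every such $E\cong\mathcal{O}(k)\oplus\mathcal{O}(-k-3)$ with $k\ge -1$ is destabilized by $\mathcal{O}(k)$, so $\phi=0$ is impossible. Hence $E=L_1\oplus L_2$, $\phi(L_2)=0$, and $\phi|_{L_1}\colon L_1\to L_2(3)$ is nonzero. In particular $\phi$ is nilpotent, so $\mathcal{M}^{\mathbb{C}^\ast}\subset\mathcal{N}$, consistent with \Cref{lem-semiprojective}.

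Second, stability constrains the degrees. The only $\phi$-invariant line subbundle is $L_2=\ker\phi$ (generically), so stability requires $\deg L_2<-3/2$, that is, $\deg L_2\le-2$ and $\deg L_1=-3-\deg L_2\ge -1$. Nonvanishing of $\phi|_{L_1}\in H^0(L_1^{-1}L_2(3))$ requires $\deg L_2+3-\deg L_1\ge 0$, that is, $2\deg L_1\le 3$ and $\deg L_1\le 1$. This leaves two cases for $(\deg L_1,\deg L_2)$:
\begin{enumerate}
\item $(0,-3)$: here $\phi|_{L_1}\in H^0(\mathcal{O})$ is a nonzero constant, which we normalize to $1$ by rescaling a frame. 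This gives exactly $(E_0,\phi_0)$.
\item $(-1,-2)$: here $\phi|_{L_1}=s_D\in H^0(\mathcal{O}(2))$ is nonzero, determined up to scalars by its divisor $D\in C^{(2)}$.
\end{enumerate}
Conversely, each of these pairs is stable, since the unique invariant line subbundle $L_2$ has degree below $-3/2$. This also explains the ``idea of the proof'' remark preceding the statement: the destabilizing summand $\mathcal{O}(k)$ cannot be the kernel.

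Third, we identify isomorphism classes with the stated parameter spaces. Two pairs $(\mathcal{O}(-1)\oplus\mathcal{O}(-2),s_D)$ and $(\mathcal{O}(-1)\oplus\mathcal{O}(-2),s_{D'})$ are isomorphic iff $s_D$ and $s_{D'}$ are proportional. The automorphisms of $E$ are the matrices $\left(\begin{smallmatrix}a&0\\ c&d\end{smallmatrix}\right)$ with $c\in H^0(\mathcal{O}(-1))=0$. Conjugation by such a matrix multiplies the lower-left entry by $d/a$. So these classes form $\mathbb{P}H^0(\mathcal{O}(2))\cong C^{(2)}$, which also matches $s_D=s_{\mathcal{O}(-1)\hookrightarrow E_1}(\phi_D)$.

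The main obstacle is upgrading this bijection to an isomorphism of the component $X_1$ with $C^{(2)}$, and showing that $X_1$ and $X_0$ are distinct connected components. I would build the family $(E_1,\phi_D)$ over $\mathbb{P}H^0(\mathcal{O}(2))$ directly: take $E_1$ pulled back and the Higgs field twisted by $\mathcal{O}_{\mathbb{P}^2}(1)$. This gives a bijective morphism $C^{(2)}\to X_1$. Since the fixed components are smooth by \Cref{thm-BB}, it suffices to check that the morphism is an isomorphism on tangent spaces. The weight-zero deformations are computed by the $\mathbb{C}^\ast$-invariant part of the deformation complex $\mathrm{End}_0E\to\mathrm{End}_0E(3)$, which has dimension $h^0(\mathcal{O}(2))-1=2$. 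Separation of the components is immediate, because the underlying bundle type is locally constant: by semicontinuity of $h^0(E)$, $\mathcal{O}\oplus\mathcal{O}(-3)$ cannot appear as a limit of $\mathcal{O}(-1)\oplus\mathcal{O}(-2)$ inside the closed set of fixed points of this form, and $X_0$ is a single point.
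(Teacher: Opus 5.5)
Your classification of the fixed points is the paper's argument, which the paper only sketches: the kernel of $\phi$ must be the non-destabilizing summand, and nonvanishing of $\phi$ on the other summand bounds its degree, leaving $E_1$ and $E_0$. Two details in this part need correcting.

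\textbf{The degree bound.} $\deg\big(L_1^{-1}L_2(3)\big)=-2\deg L_1$, so nonvanishing gives $\deg L_1\le 0$, not $\deg L_1\le 1$. With your bound you would still have to discard the case $(1,-4)$, which fails because $H^0(\mathcal{O}(-2))=0$.

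\textbf{The automorphisms of $E_1$.} $\mathrm{Aut}(E_1)$ is not diagonal. Since $\mathrm{Hom}(\mathcal{O}(-2),\mathcal{O}(-1))=H^0(\mathcal{O}(1))\cong\mathbb{C}^2$, the automorphisms are the matrices $\left(\begin{smallmatrix}a&c\\0&d\end{smallmatrix}\right)$ with $c\in H^0(\mathcal{O}(1))$. Conjugating $\phi_D$ by such a matrix with $c\neq0$ produces the diagonal entries $\pm c\,s_D/a\neq0$. Your conclusion that $(E_1,\phi_D)\cong(E_1,\phi_{D'})$ iff $s_D$ and $s_{D'}$ are proportional is still true. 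The correct reason is that an isomorphism of Higgs bundles carries $\ker\phi_D=\mathcal{O}(-2)$ onto $\ker\phi_{D'}=\mathcal{O}(-2)$, which forces $c=0$. Relatedly, drop the word ``generically'': any $\phi$-invariant line subbundle $M$ has $\phi|_M\in H^0(\mathcal{O}(3))$ with square zero, so $M=\ker\phi$ exactly. This is what your converse stability check needs.

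\textbf{Separation of the components.} This step fails as written. Upper semicontinuity only makes $\{h^0(E)\ge1\}$ closed, so $h^0$ may jump up under specialization. Thus $\mathcal{O}\oplus\mathcal{O}(-3)$ can be a limit of $\mathcal{O}(-1)\oplus\mathcal{O}(-2)$, and in $\mathcal{M}$ it is one. For example, $(E_e,1)\to(E_0,1)$ as $e\to0$ in \Cref{prop-nilpotent-component}, where $E_e\cong\mathcal{O}(-1)\oplus\mathcal{O}(-2)$ for $e\neq0$. So semicontinuity cannot show that $(E_0,\phi_0)\notin\overline{X_1}$; you need an argument that uses the fixed-point structure. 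There are two easy fixes.
\begin{enumerate}
\item $X_1$ is the image of the projective variety $\mathbb{P}H^0(\mathcal{O}(2))\cong C^{(2)}$ under your classifying morphism, hence closed.
\item Run your weight-zero computation at $(E_0,\phi_0)$. There the weight-zero subcomplex is $\mathcal{O}\to L_1^{-1}L_2(3)=\mathcal{O}$, given by multiplication by $\pm2$, which is an isomorphism. So $\mathbb{H}^1=0$ and $(E_0,\phi_0)$ is an isolated fixed point.
\end{enumerate}

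\textbf{The isomorphism $C^{(2)}\cong X_1$.} Matching $\dim T X_1=2$ with $\dim C^{(2)}$ does not by itself show that the differential of $C^{(2)}\to X_1$ is an isomorphism. It is simpler to conclude as the paper does for $X_0^-$ in \Cref{prop-nilpotent-component}: apply Zariski's main theorem to the bijective morphism $C^{(2)}\to X_1$ onto a smooth variety, in characteristic zero.
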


The $\mathbb{C}^\ast$-action preserves the underlying bundle $E$ and the divisor of $s_i(\phi)$ for $L \overset{i}{\hookrightarrow} E$ the destabilizing line subbundle, since $s_i(t\phi)=t\,s_i(\phi)$.
In our setting, the limit of $(E,\phi)$ as $t\to0$ is the Hodge bundle with the same underlying bundle and the same divisor.
We refer to \cite[Prop.~3.4]{HH21} for the proof.
\begin{proposition}\label{prop-flows}
(i) Let $\alpha = (E_1, \phi_D) \in X_1$. 
Then a Higgs bundle $(E, \phi) \in \mathcal{M}$ is contained in $\alpha^+$ 
if and only if 
\[ \left( E, \phi \right) \cong \left(\mathcal{O}(-1)\oplus\mathcal{O}(-2),\begin{pmatrix} \ast & \ast \\ s_D & \ast \end{pmatrix}\right), \qquad D \in C^{(2)}. \]
Consequently, the underlying bundles of Higgs bundles $(E, \phi) \in X_1^+$ are $\mathcal{O}(-1) \oplus \mathcal{O}(-2)$. 
\\
(ii) For $\alpha$ the uniformizing Higgs bundle $(E_0, \phi_0)$,
the upward flow $\alpha^+$ is the Hitchin section \eqref{Hitchin-section}.
\end{proposition}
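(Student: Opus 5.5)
Both parts reduce to the observation stated just before the proposition: the limit of $(E,\phi)$ as $t\to0$ is the Hodge bundle with the same underlying bundle and the same divisor of $s_i(\phi)$, where $i$ is the inclusion of the destabilizing line subbundle. We compute $\lim_{t\to0}t.(E,\phi)$ explicitly by gauge transformations and then compare with \Cref{prop-fixed}.

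For (i), take first $(E,\phi)$ with $E=\mathcal{O}(-1)\oplus\mathcal{O}(-2)$ and, in the splitting, $\phi=\left(\begin{smallmatrix}a&c\\ s&-a\end{smallmatrix}\right)$. Here $a\in H^0(\mathcal{O}(3))$, $c\in H^0(\mathcal{O}(4))$ and $s\in H^0(\mathcal{O}(2))$.
\begin{itemize}
\item Stability forces $s\neq0$: otherwise $\mathcal{O}(-1)$ is $\phi$-invariant of degree $-1>-3/2$.
\item Act by the automorphism $g_t=\mathrm{diag}(1,t)$ of $E$. Then
\[ g_t\,(t\phi)\,g_t^{-1}=\begin{pmatrix}ta&c\\ t^2 s&-ta\end{pmatrix}. \]
This is the wrong normalisation, so use $g_t=\mathrm{diag}(t,1)$ instead, which gives
\[ \begin{pmatrix}ta&t^2c\\ s&-ta\end{pmatrix}\longrightarrow\begin{pmatrix}0&0\\ s&0\end{pmatrix}=\phi_D,\qquad D=\mathrm{div}(s). \]
\item Since $\phi_D$ is stable by \Cref{prop-fixed}, the family converges in $\mathcal{M}$, and the limit is $(E_1,\phi_D)$. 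Hence $(E,\phi)\in\alpha^+$ for $\alpha=(E_1,\phi_D)$.
\end{itemize}
Conversely, the splitting type of $E$ is constant along the orbit, and the $\mathbb{C}^\ast$-action preserves $\mathrm{div}\,s_i(\phi)$. Any $(E,\phi)\in\alpha^+$ therefore has $E\cong\mathcal{O}(-1)\oplus\mathcal{O}(-2)$ and lower-left entry vanishing on $D$.

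The main obstacle in this converse is the splitting type: we must rule out that $E=\mathcal{O}\oplus\mathcal{O}(-3)$ flows to $X_1$. Take such an $E$ with $\phi=\left(\begin{smallmatrix}a&c\\ u&-a\end{smallmatrix}\right)$, where $u\in H^0(\mathcal{O}(3)\otimes\mathcal{O}(-3))=\mathbb{C}$.
\begin{itemize}
\item Stability gives $u\neq0$, since otherwise $\mathcal{O}$ is invariant.
\item Conjugate by $\mathrm{diag}(t,1)$, which yields the lower-left entry $u$ and the other entries $O(t)$. The limit is then $(E_0,\phi_0)$ up to scaling $u$ away, which lies in $X_0$.
\item Because $\mathbb{C}^\ast$-limits in the separated space $\mathcal{M}$ are unique, these Higgs bundles flow to $X_0$ and not to $X_1$.
\end{itemize}
This also proves the "consequently" clause: the two splitting types exhaust $\mathcal{M}$, and they flow to $X_1$ and to $X_0$ respectively.

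For (ii), the computation above shows that every stable $(E,\phi)$ with $E=E_0$ flows to $(E_0,\phi_0)$. It remains to identify these bundles with the Hitchin section.
\begin{itemize}
\item Normalise $u=1$ by rescaling one summand.
\item Use the unipotent automorphisms $\left(\begin{smallmatrix}1&f\\0&1\end{smallmatrix}\right)$ with $f\in H^0(\mathcal{O}(3))$ to kill $a$. This works because conjugation shifts $a$ by $f u = f$.
\item The Higgs field becomes $\left(\begin{smallmatrix}0&-b\\1&0\end{smallmatrix}\right)$ with $b=\det\phi$, i.e.\ the bundle is $\mathrm{Hit}(b)$.
\end{itemize}
Conversely, every $\mathrm{Hit}(b)$ has underlying bundle $E_0$, so it flows to $(E_0,\phi_0)$. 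Hence $\alpha^+=\mathrm{Hit}(B)$.
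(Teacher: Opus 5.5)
Your proposal is correct and is essentially the paper's argument made explicit. The paper only asserts that $\lim_{t\to0}t.(E,\phi)$ is the Hodge bundle with the same underlying bundle and the same divisor of $s_i(\phi)$ for the destabilizing subbundle, and defers to \cite[Prop.~3.4]{HH21}; your gauge transformation $\mathrm{diag}(t,1)$ on both splitting types, uniqueness of limits in the separated space $\mathcal{M}$, and the unipotent normalization $a\mapsto a+f$ for (ii) are the computation behind that claim, and your sentence deriving $E\cong\mathcal{O}(-1)\oplus\mathcal{O}(-2)$ from constancy of the splitting type along the orbit (not valid alone, since splitting type can jump in limits) is repaired by your explicit limit for the $E_0$-type bundles.
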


Identifying $X_1$ with $C^{(2)}$, we write $D \in C^{(2)}$ for Hodge bundles in $X_1$ from now on. 
The \BB stratification of $\mathcal{M}$ is the decomposition of $\mathcal{M}$ into attractors of Hodge bundle components,
\[ \mathcal{M}= X_1^+\sqcup X_0^+ = X_1^+\sqcup \mathrm{Hit}(B),  \]
where $X_1^+$ is the open dense stratum, i.e.\ $\overline{X_1^+} = \mathcal{M}$.
For all Hodge bundles $\alpha$ we have $\dim \alpha^+ = \dim B = 7$.

\paragraph{The nilpotent fiber}
The nilpotent fiber $\mathcal{N}$ is the disjoint union of the repellers of the Hodge bundle components $X_1$ and $X_0$ (\Cref{thm-BB}(iii)),  
\[ \mathcal{N} = X_1^- \sqcup X_0^-, \]
where $\dim X_1^- = \dim X_0^- = \dim \mathcal{N} = 2$.
The following proposition describes these repellers explicitly.

\begin{proposition} \label{prop-nilpotent-component}
The repeller $X_1^-$ is precisely $X_1$.
The repeller $X_0^-$ consists of the uniformizing Higgs bundle and of the nilpotent Higgs bundles of the form
\[ (E, \phi_{u, v}) = \left( \mathcal{O}(-1) \oplus \mathcal{O}(-2),
\begin{pmatrix}
    uv & -u^2 \\ v^2 & -uv
\end{pmatrix}
\right), \]
where $u\in H^0(\mathcal{O}(2))$ and $v\in H^0(\mathcal{O}(1))$ have no common zero.
The kernel of $\phi_{u,v}$ is the line subbundle
\[ (u,v)^T: \mathcal{O}(-3) \hookrightarrow \mathcal{O}(-1) \oplus \mathcal{O}(-2). \]
Finally, the universal extension of $\mathcal{O}$ by $\mathcal{O}(-3)$ gives a $\mathbb{C}^\ast$-equivariant isomorphism of varieties $H^1(\mathcal{O}(-3))\cong X_0^-$, where $t\in\mathbb{C}^\ast$ acts on $H^1(\mathcal{O}(-3))\cong\mathbb{C}^2$ by $e\mapsto t^{-1}e$.
This isomorphism maps the origin to the uniformizing Higgs bundle.
\end{proposition}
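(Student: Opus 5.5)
We determine the nilpotent Higgs bundles by splitting type, and then sort them by where the flow $t\to\infty$ converges.

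\emph{Step 1: the splitting type.} Let $(E,\phi)\in\mathcal{N}$. By \Cref{thm-BB}(iii), $\mathcal{N}=X_1^-\sqcup X_0^-$. The $\mathbb{C}^\ast$-action preserves the underlying bundle $E$, and the splitting type is a closed condition in the direction of specialization. So if $E\cong\mathcal{O}\oplus\mathcal{O}(-3)$, then stability forces $\phi$ to map $\mathcal{O}$ nontrivially into $\mathcal{O}(-3)\otimes\mathcal{O}(3)$, i.e.\ the lower-left entry is a nonzero constant. Nilpotency then forces the remaining entries to vanish after an automorphism of $E$, so $(E,\phi)$ is the uniformizing Higgs bundle.

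If instead $E\cong\mathcal{O}(-1)\oplus\mathcal{O}(-2)$, then $\phi$ is nilpotent, nonzero and trace-free, so $\ker\phi$ is a line subbundle $N$ with $\phi(E)\subset N\otimes\mathcal{O}(3)$. Since $\det E=\mathcal{O}(-3)$, we get $E/N\cong N^{-1}(-3)$, and $\phi$ factors through a nonzero section $c$ of $N^2(3)$. This forces $\deg N\ge -3/2$ up to saturation. The possibilities are:
\begin{itemize}
\item[(a)] $N=\mathcal{O}(-2)$, the summand, which gives exactly the Hodge bundles $(E_1,\phi_D)$ of \Cref{prop-fixed};
\item[(b)] $N\cong\mathcal{O}(-3)$ embedded via $(u,v)^T$ with no common zero, since $\mathcal{O}(-1)$ cannot be the kernel by stability.
\end{itemize}
In case (b), the quotient is $E/N\cong\mathcal{O}$, and $c\in H^0(\mathcal{O}(-3))$ would vanish. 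So I must recheck the degrees: $\phi$ factors as $E\to E/N\cong\mathcal{O}\to N(3)=\mathcal{O}$, so $c$ is a constant. The map $E\to\mathcal{O}$ is $(v,-u)$, up to sign, and composing gives $\phi=(u,v)^T(v,-u)$. This is exactly $\phi_{u,v}$. Stability is checked against the subbundle $\mathcal{O}(-1)$: it is not $\phi$-invariant because $v^2\ne0$ wherever $v\neq0$.

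\emph{Step 2: limits.} Case (a) consists of fixed points, so it lies in $X_1\subset X_1^-$. For case (b), observe that $t.\phi_{u,v}=\phi_{\sqrt t u,\sqrt t v}$. Using the automorphism $\mathrm{diag}(t^{-1/2},t^{1/2})$-type gauge, I would instead exhibit $(E,\phi_{u,v})$ as the extension $0\to\mathcal{O}(-3)\to E\to\mathcal{O}\to0$, whose Higgs field is the map $\mathcal{O}\to\mathcal{O}(-3)(3)$ equal to $1$. Scaling $\phi$ by $t$ is then equivalent, via rescaling the sub and quotient, to scaling the extension class $e$ by $t^{-1}$. As $t\to\infty$, $e\to0$, and the limit is $(\mathcal{O}\oplus\mathcal{O}(-3),\phi_0)$. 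Hence case (b) lies in $X_0^-$, and by disjointness $X_1^-=X_1$.

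\emph{Step 3: the isomorphism.} Over $H^1(\mathcal{O}(-3))\cong\mathbb{C}^2$, the universal extension $E_e$ carries the Higgs field $E_e\to\mathcal{O}\xrightarrow{1}\mathcal{O}(-3)\otimes\mathcal{O}(3)\to E_e(3)$, which is nilpotent. This gives a family and hence a map $H^1(\mathcal{O}(-3))\to\mathcal{N}$. Equivariance follows from the rescaling computed in Step 2.
\begin{itemize}
\item[] \emph{Injectivity.} Isomorphisms of extensions preserving $\phi$ act trivially on $e$: the automorphisms fixing $\phi$ scale the sub and quotient compatibly.
\item[] \emph{Surjectivity onto $X_0^-$.} This is Step 1, since every nonzero $e$ gives $E_e\cong\mathcal{O}(-1)\oplus\mathcal{O}(-2)$, with $(u,v)$ recovering $e$ up to $\mathrm{GL}$.
\end{itemize}
The map is a bijective morphism onto a normal surface, so it is an isomorphism; alternatively, write the inverse explicitly via $(u,v)$ modulo the $\mathbb{C}^\ast$ of automorphisms.

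The main obstacle is Step 3's identification of the parameter space. The pairs $(u,v)$ modulo $\mathrm{Aut}(E)$ must be matched with the extension classes $\mathbb{C}^2\setminus0$ compatibly with the variety structure. I would handle this by computing the connecting map $H^0(\mathcal{O})\to H^1(\mathcal{O}(-3))$ in terms of $(u,v)$.
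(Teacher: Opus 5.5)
Your argument is essentially the paper's. Both classify nilpotent stable Higgs bundles by the line subbundle $N=\ker\phi$. Both identify the locus $\deg N=-3$ with the extensions $E_e$ of $\mathcal{O}$ by $\mathcal{O}(-3)$ whose Higgs field is induced by a constant, and use $t.(E_e,1)\cong(E_{e/t},1)$ to compute the limit as $t\to\infty$. Both finish with a bijective equivariant morphism and Zariski's main theorem.

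The paper sorts by $\deg\ker\phi$ rather than by splitting type. This makes your separate treatment of $E\cong\mathcal{O}\oplus\mathcal{O}(-3)$ unnecessary: it is just the case $e=0$. For $e\neq0$, the connecting map $1\mapsto e$ gives $H^0(E_e)=0$, hence $E_e\cong\mathcal{O}(-1)\oplus\mathcal{O}(-2)$.

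Two points need repair.

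\emph{The degree count in Step 1.} As written it is wrong. The induced map $E/N\cong N^{-1}(-3)\to N\otimes\mathcal{O}(3)$ is a section of $N^{2}(6)$, not $N^{2}(3)$, so its nonvanishing gives $\deg N\geq-3$. Since $N$ is $\phi$-invariant, stability gives $\deg N\leq-2$. Your stated bound $\deg N\geq-3/2$ would contradict stability and leave no nilpotent Higgs field on $\mathcal{O}(-1)\oplus\mathcal{O}(-2)$ at all. With the corrected bounds, your cases (a) and (b) are exactly right. In case (a), however, $N$ is the summand $\mathcal{O}(-2)$ only after an automorphism of $E$. The clean way to say this, as in the paper, is that $0\to N\to E\to\mathcal{O}(-1)\to0$ splits because $H^1(\mathcal{O}(-1))=0$.

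\emph{Normality of the target.} Zariski's main theorem needs the target to be normal, which you assert without justification. It follows from \Cref{thm-BB}(i): $X_0$ is a single point $\alpha$, so $X_0^-=\alpha^-\cong T^-_\alpha\cong\mathbb{C}^2$ is smooth. This is how the paper concludes. So the explicit matching of $(u,v)$ with the connecting map, which you flag as the main obstacle, is not needed.

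Finally, your remark that the splitting type is closed under specialization plays no role and can be dropped.
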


\begin{proof}
Let $(E,\phi)\in\mathcal{N}$.
Then $\phi^2=0$ by the Cayley--Hamilton theorem, and $\phi\neq0$ as in the argument before \Cref{prop-fixed}, hence $\ker\phi=\mathcal{O}(c)$ is the only $\phi$-invariant line subbundle of $E$.
Moreover, $\phi$ induces a nonzero map
\[ E/\ker\phi\cong\mathcal{O}(-3-c)\longrightarrow\ker\phi\otimes\mathcal{O}(3)=\mathcal{O}(c+3). \]
Hence stability gives $c\leq-2$, and the nonzero map gives $c\geq-3$.

If $c=-2$, then $E=\mathcal{O}(-1)\oplus\mathcal{O}(-2)$, since $H^1(\mathcal{O}(-1))=0$.
Hence $(E,\phi)$ is a Hodge bundle in $X_1$.
If $c=-3$, then $E=E_e$ is the extension of $\mathcal{O}$ by $\mathcal{O}(-3)$ with class $e\in H^1(\mathcal{O}(-3))\cong\mathbb{C}^2$, and $\phi$ is the composition
\[ E_e\to\mathcal{O}\xrightarrow{\ \kappa\ }\mathcal{O}(-3)\otimes\mathcal{O}(3)\to E_e\otimes\mathcal{O}(3) \]
for a constant $\kappa\neq0$.
Rescaling the subbundle $\mathcal{O}(-3)$ by $\nu\in\mathbb{C}^\ast$ gives an isomorphism of $(E_e,\kappa)$ with $(E_{\nu e},\nu\kappa)$.
Hence every such Higgs bundle is isomorphic to $(E_e,1)$ for some $e$.
The class $e$ is unique, since an isomorphism preserves the kernels.
So the universal extension gives an injective morphism $H^1(\mathcal{O}(-3))\to\mathcal{M}$, $e\mapsto(E_e,1)$, onto the locus $c=-3$.

Since $t.(E_e,1)\cong(E_{e/t},1)$ limits to the uniformizing Higgs bundle $(E_0,1)$ as $t\to\infty$, the Higgs bundles with $c=-3$ form $X_0^-$, and those with $c=-2$ form $X_1^-=X_1$.
Hence $e\mapsto(E_e,1)$ is a bijective morphism $H^1(\mathcal{O}(-3))\to X_0^-$, which is $\mathbb{C}^\ast$-equivariant for the action $t.e=t^{-1}e$.
Since $X_0^-$ is smooth by \Cref{thm-BB}(i), this morphism is an isomorphism of varieties by Zariski's main theorem.

Finally, for $e\neq0$ the connecting map $H^0(\mathcal{O})\to H^1(\mathcal{O}(-3))$ maps $1$ to $e$, hence $H^0(E_e)=0$ and $E_e\cong\mathcal{O}(-1)\oplus\mathcal{O}(-2)$ by Grothendieck's theorem.
Thus the inclusion of $\ker\phi$ is a column $(u,v)^T$ without common zero, so the quotient map is the row $(v,-u)$ up to a scalar.
Hence $\phi$ is a nonzero multiple of $(u,v)^T(v,-u)$, i.e.\ $\phi=\phi_{u,v}$ after rescaling $(u,v)$.
Conversely, $\phi_{u,v}^2=0$ and its kernel $(u,v)^T\colon\mathcal{O}(-3)\hookrightarrow E$ is the only $\phi_{u,v}$-invariant line subbundle of $E$, hence $(E,\phi_{u,v})$ is stable with $c=-3$.
\end{proof}

\section{Degeneration of Baker--Akhiezer divisors} 
\label{sec-degen}

\subsection{Wobbly Hodge bundles}
\label{sec-wobbly}

\paragraph{Very stable and wobbly fixed points}
Very stable bundles were introduced by Drinfeld in the context of the geometric Langlands program \cite{Laumon88}.
Donagi--Pantev called the stable bundles that are not very stable \emph{wobbly} \cite{DP09}.
These notions were later generalized by Hausel--Hitchin \cite{HH21} 
to the setting of fixed points in semiprojective varieties,
whose \emph{core} plays the role of the nilpotent fiber.

\begin{definition} [{\cite{HH21}}]
\label{def-verystable}
A fixed point $\alpha \in \mathcal{M}^{\mathbb{C}^\ast}$ is called \emph{very stable} 
if the only point of its upward flow which lies in the nilpotent fiber is $\alpha$ itself, i.e.\ 
\[ \alpha^+\cap\mathcal{N}=\{\alpha\}. \]
Otherwise, $\alpha$ is called \emph{wobbly}.
\end{definition}

For $X \in \pi_0(\mathcal{M}^{\mathbb{C}^\ast})$ containing $\alpha$,
the condition that $\alpha$ is wobbly is equivalent to the existence of 
another component $Y \in \pi_0(\mathcal{M}^{\mathbb{C}^\ast})$ such that 
\[ \alpha \in X \cap \overline{Y^-}. \]
The point $\alpha$ lies in $\overline{Y^-}$ if and only if there exists $x \in \alpha^+$ such that 
$\underset{t \rightarrow \infty}{\lim} t.x \in Y$.

In our setting, since $X_0^+$ is the Hitchin section itself, the uniformizing Higgs bundle $(E_0, \phi_0)$ is very stable. 
The wobbly locus is therefore
\[ W = X_1 \cap \overline{X_0^-}. \]

Before determining the wobbly locus explicitly, we recall an equivalent characterization of very stable points. 
We refer to \cite{PPe19, HH21} for the proofs.

\begin{proposition}\label{prop-verystable}
A fixed point $\alpha\in\mathcal{M}^{\mathbb{C}^\ast}$ is very stable if and only if its upward flow $\alpha^+$ is closed in $\mathcal{M}$.
In this case the restriction $h|_{\alpha^+}\colon \alpha^+\to B$ of the Hitchin map is proper, hence finite and flat.
Furthermore, $\alpha^+$ is a multisection of the Hitchin fibration, 
i.e.\ every Hitchin fiber contains the same nonzero number of points in $\alpha^+$, counted with multiplicity.
\end{proposition}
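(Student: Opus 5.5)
The plan is to deduce both directions of the first claim from the Bia\l{}ynicki-Birula structure in \Cref{thm-BB} and semiprojectivity (\Cref{lem-semiprojective}), and then derive the finiteness statements from properness of $h$ restricted to a closed subvariety.

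\emph{Very stable implies $\alpha^+$ closed.} Take $y\in\overline{\alpha^+}$. By \Cref{lem-semiprojective}(i) the limit $\beta=\lim_{t\to0}t.y$ exists and is a fixed point, and $\beta\in\overline{\alpha^+}$ because $\overline{\alpha^+}$ is $\mathbb{C}^\ast$-invariant and closed. Next I use the $\mathbb{C}^\ast$-equivariant linear model $\alpha^+\cong T^+_\alpha$ with positive weights. Its closure is controlled by the Hitchin map: $h|_{\alpha^+}$ is equivariant from weights on $T^+_\alpha$ to weight $2$ on $B$. If $h|_{\alpha^+}$ were not finite, there would be a positive-dimensional fiber. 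Since it is conical, $h^{-1}(0)\cap\alpha^+$ would contain a $\mathbb{C}^\ast$-orbit $\{t.x\}$ with $x\neq\alpha$. Such a point $x$ lies in $\alpha^+\cap\mathcal{N}$, contradicting very stability. So $h^{-1}(0)\cap\alpha^+=\{\alpha\}$. Because $h$ is a weighted-homogeneous morphism $T^+_\alpha\to B$ whose zero fiber is only the origin, it is finite by the graded Nakayama argument: $\mathbb{C}[T^+_\alpha]$ is finite over $\mathbb{C}[B]$ when $\mathbb{C}[T^+_\alpha]/(h^\ast\mathfrak{m}_0)$ is finite-dimensional. A finite morphism is proper, and $h\colon\mathcal{M}\to B$ is separated, so $\alpha^+\to\mathcal{M}$ is proper over $B$ and hence closed.

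\emph{$\alpha^+$ closed implies very stable.} Suppose $x\in\alpha^+\cap\mathcal{N}$ with $x\neq\alpha$. By \Cref{lem-semiprojective}(iii) the limit $\lim_{t\to\infty}t.x=\gamma$ exists. It lies in $\overline{\alpha^+}=\alpha^+$. But $\gamma$ is a fixed point, and the only fixed point in $\alpha^+\cong T^+_\alpha$ is the origin, since the weights are positive; so $\gamma=\alpha$. Then the orbit closure of $x$ is a complete curve $\mathbb{P}^1$ inside the affine space $\alpha^+$, which is impossible. Alternatively, $t.x$ tends to $\alpha$ both as $t\to0$ and $t\to\infty$, contradicting the positivity of the weights in the linear model: $t.x\to\infty$ in $T^+_\alpha$ as $t\to\infty$.

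\emph{Flatness and multisection.} Under very stability, $h|_{\alpha^+}\colon\mathbb{C}^{n^+}\to B$ is finite. Here $n^+=7=\dim B$ by the remark after \Cref{prop-flows}, so it is a finite morphism between smooth varieties of equal dimension. Miracle flatness then gives flatness. The degree is therefore constant, and it is nonzero since the map is dominant: it is finite between varieties of the same dimension. Counting with multiplicity, every fiber has the same nonzero cardinality.

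The main obstacle is the first direction, namely passing from $\alpha^+\cap\mathcal{N}=\{\alpha\}$ to finiteness. The key is the homogeneity argument with the graded Nakayama lemma; I would check carefully that $h$ is indeed weighted-homogeneous in the linear coordinates supplied by \Cref{thm-BB}(i). In the paper's setting one could instead invoke \cite{PPe19, HH21} directly.
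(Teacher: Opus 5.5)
Your argument is correct and is essentially the standard proof in \cite{PPe19, HH21}, which the paper cites in place of a proof. Very stability says exactly that $h|_{\alpha^+}^{-1}(0)=\{\alpha\}$, and graded Nakayama for the weighted-homogeneous map $T^+_\alpha\to B$ then gives finiteness, hence closedness by cancellation of properness. The converse holds because the complete orbit closure of a point of $\mathcal{N}$ cannot lie in the closed affine space $\alpha^+$. Miracle flatness with $n^+=\dim B=7$ gives the constant nonzero degree.

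The one blemish is the preliminary claim ``if $h|_{\alpha^+}$ were not finite, there would be a positive-dimensional fiber.'' This is false for general morphisms, and in the homogeneous setting it is only true because of the graded Nakayama step itself, so invoking it first is circular. The limit $\beta$ of a point $y\in\overline{\alpha^+}$ is likewise never used. Drop both, since the Nakayama argument does all the work.
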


\paragraph{The wobbly locus}
The following characterization of the wobbly locus is analogous to that for $K$-twisted Higgs bundles \cite[Thm.~1.2]{HH21}.
Let $\Delta = \{2p \mid p \in C\}\subset X_1\cong C^{(2)}$ be the diagonal.

\begin{corollary}\label{cor-wobbly}
The wobbly locus is precisely the diagonal,
\[ W = \Delta. \]
Equivalently, the Hodge bundle $(E_1,\phi_D)$ is wobbly if and only if $D = 2p$ for some $p \in C$.
\end{corollary}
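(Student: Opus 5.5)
The plan is to use the definition of wobbly directly, $\alpha^+\cap\mathcal{N}\neq\{\alpha\}$ (\Cref{def-verystable}), together with the explicit description of $\mathcal{N}=X_1\sqcup X_0^-$ from \Cref{prop-nilpotent-component} and the description of upward flows in \Cref{prop-flows}(i). Fix $\alpha=(E_1,\phi_D)\in X_1$.

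First I will show that no point of $\mathcal{N}$ other than $\alpha$ lies in $\alpha^+$ except possibly points of $X_0^-\smallsetminus\{(E_0,\phi_0)\}$. Points of $X_1$ are fixed, so they lie only in their own upward flow. The uniformizing Higgs bundle has underlying bundle $\mathcal{O}\oplus\mathcal{O}(-3)$, so it is not in $X_1^+$ by \Cref{prop-flows}(i). So $\alpha$ is wobbly exactly when some $(E,\phi_{u,v})$ with $u\in H^0(\mathcal{O}(2))$, $v\in H^0(\mathcal{O}(1))$ without common zero lies in $\alpha^+$.

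Next I will identify the limit $\lim_{t\to0}t.(E,\phi_{u,v})$. In $E=\mathcal{O}(-1)\oplus\mathcal{O}(-2)$ the destabilizing subbundle $\mathcal{O}(-1)$ is unique, because $\mathrm{Hom}(\mathcal{O}(-1),E)$ is one-dimensional. Hence the frame adapted to it is the given splitting, up to automorphisms of $E$, and these preserve the divisor of $s_i$. The lower-left entry of $\phi_{u,v}$ is $v^2$, so $s_{\mathcal{O}(-1)\hookrightarrow E}(\phi_{u,v})=v^2$. By \Cref{prop-flows}(i), and the remark before it that the limit is the Hodge bundle with the same underlying bundle and the same divisor, $(E,\phi_{u,v})\in\alpha^+$ if and only if $D=\mathrm{div}(v^2)=2\,\mathrm{div}(v)$. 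Consequently, if $\alpha$ is wobbly then $D\in\Delta$.

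Conversely, given $D=2p$, I choose $v\in H^0(\mathcal{O}(1))$ vanishing at $p$ and $u\in H^0(\mathcal{O}(2))$ with $u(p)\neq0$. Then $u$ and $v$ have no common zero, so $(E,\phi_{u,v})$ is a stable nilpotent Higgs bundle in $X_0^-$ by \Cref{prop-nilpotent-component}, and it lies in $\alpha^+$ by the previous step. It differs from $\alpha$ since its kernel has degree $-3$. Hence $\alpha$ is wobbly, and $W=\Delta$.

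The main obstacle is the middle step: justifying rigorously that the $t\to0$ limit of a non-fixed nilpotent point in $X_1^+$ is read off from the lower-left entry in the adapted frame. This requires the uniqueness of $\mathcal{O}(-1)\subset E$ and the identity $s_i(t\phi)=t\,s_i(\phi)$, which together show that the divisor is a $\mathbb{C}^\ast$-invariant. I will lean on \Cref{prop-flows}(i) for this.
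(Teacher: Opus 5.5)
Your proof is correct, and it reaches the result by a somewhat different route than the paper.

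\textbf{The paper's route.} The paper works with the characterization $W=X_1\cap\overline{X_0^-}$.
\begin{itemize}
\item It gets $\Delta\subset W$ from the computation $\lim_{t\to0}t.(E,\phi_{u,v})=2\,\mathrm{div}(v)$, together with the $\mathbb{C}^\ast$-invariance of $X_0^-$.
\item It gets $W\subset\Delta$ from a topological argument: the limit map $X_1^+\to X_1$ is continuous, and $\Delta$ is closed.
\end{itemize}

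\textbf{Your route.} You apply \Cref{def-verystable} directly. Because $\mathcal{N}=X_1\sqcup X_0^-$ is known explicitly, the only possible points of $\alpha^+\cap\mathcal{N}$ other than $\alpha$ are the $(E,\phi_{u,v})$. Both inclusions therefore follow from the single limit computation. You need no closure or continuity argument, and no appeal to the equivalence between wobbliness and lying in the closure of another repeller.

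\textbf{What each approach buys.} Your argument also yields the exact set
\[ \alpha^+\cap\mathcal{N}=\{\alpha\}\cup\{(E,\phi_{u,v}) : \mathrm{div}(v)=p\} \qquad \text{for } D=2p. \]
This is the $\mathbb{C}^\ast$-family of nilpotent Higgs bundles that the paper mentions after the corollary to show $h|_{\alpha^+}$ is not finite. The paper's closure formulation, in contrast, is closer to the general Hausel--Hitchin setting, where one usually cannot enumerate the core point by point.

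\textbf{The step you flagged.} The step you called the main obstacle can be made self-contained, without relying on \Cref{prop-flows}(i). Conjugate $t\phi_{u,v}$ by $\mathrm{diag}(1,t^{-1})$. This turns it into $\left(\begin{smallmatrix} tuv & -t^2u^2\\ v^2 & -tuv\end{smallmatrix}\right)$, a family that extends to $t=0$ with value $\left(\begin{smallmatrix}0&0\\ v^2&0\end{smallmatrix}\right)$, which is stable. Since $\mathcal{M}$ is separated, the limit is $(E_1,\phi_{2\,\mathrm{div}(v)})$.
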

\begin{proof}
Let $f\colon X_1^+\to X_1$, $x\mapsto\underset{t\to0}{\lim}t.x$, be the limit map of \Cref{thm-BB}(ii).
The map $f$ is continuous and is the identity on $X_1$.
Recall from \Cref{prop-nilpotent-component} that the points of $X_0^-$ other than $\mathrm{Hit}(0)$ are of the form $(E,\phi_{u,v})$.
It follows from \Cref{prop-flows}(i) that $f(E,\phi_{u,v}) = 2\,\mathrm{div}(v)$, which lies in $\Delta\cap\overline{X_0^-}$ as $X_0^-$ is $\mathbb{C}^\ast$-invariant.
Since we can freely choose the zero $\mathrm{div}(v)$, we have $\Delta \subset W = X_1 \cap \overline{X_0^-}$.
Conversely, to show that $W \subset \Delta$, 
observe that $f$ maps $X_0^-\smallsetminus\{\mathrm{Hit}(0)\}$ into $\Delta$, which is closed in $X_1$.
By continuity of $f$, a wobbly point $\alpha \in W  = X_1 \cap \overline{X_0^-\smallsetminus\{\mathrm{Hit}(0)\}}$ satisfies
\[ \alpha=f(\alpha)\in\overline{f\big(X_0^-\smallsetminus\{\mathrm{Hit}(0)\}\big)}\subseteq\Delta. 
\qedhere \]
\end{proof}

For $\alpha\in X_1$ very stable, it follows from \Cref{prop-verystable} that $h|_{\alpha^+}\colon\alpha^+\to B$ is finite, flat and surjective, of degree $4$ by \Cref{lem-BA}(i).
On the other hand, if $\alpha \in W$, 
the fiber of $h|_{\alpha^+}$ over $0$ contains a $\mathbb{C}^\ast$-orbit of nilpotent Higgs bundles $(E,\phi_{u,v})$, so that $h|_{\alpha^+}$ is not finite.

\paragraph{Intersection of upward flows with Hitchin fibers}
The next result describes the intersection of upward flows $\alpha^+$ for $\alpha \in X_1$ with smooth Hitchin fibers.
An analogue for $K$-twisted Higgs bundles of rank $n$ can be found in \cite[Prop.~5.18]{HH21}.

\begin{proposition}\label{lem-BA}
Let $b\in B_s$.
Let $\alpha\equiv D=p_1+p_2\in X_1$, and let $D_b\leq D$ be the subdivisor consisting of branch points of $\pi: \widetilde{C}_b \rightarrow C$.
\begin{enumerate}[label=(\roman*)]
\item Suppose $\alpha$ is very stable, i.e.\ $p_1\neq p_2$.
Then $\alpha^+\cap h^{-1}(b)$ consists of exactly $2^{2-\deg D_b}$ Higgs bundles with spectral line bundles $K_{\widetilde{C}_b}^{-1}(\widetilde{D})$ for the divisors $\widetilde{D}$ with $\mathrm{Nm}(\widetilde{D})=D$.
In particular, it consists of $4$ Higgs bundles if neither $p_1$ nor $p_2$ is a branch point.
\item Suppose $\alpha$ is wobbly, i.e.\ $p_1=p_2=p$.
Then $\alpha^+\cap h^{-1}(b)$ is empty if $p$ is a branch point of $\pi$, and otherwise consists of the two Higgs bundles with spectral line bundles $K_{\widetilde{C}_b}^{-1}(2\widetilde{p})$, $\pi(\widetilde{p})=p$.
\end{enumerate}
\end{proposition}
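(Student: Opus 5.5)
The plan is to translate membership in $\alpha^+\cap h^{-1}(b)$ into a condition on the Baker--Akhiezer divisor, and then count divisors.

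\emph{Step 1: characterize $X_1^+\cap h^{-1}(b)$.} By \Cref{prop-flows}(i), a Higgs bundle lies in $\alpha^+$ exactly when $E\cong\mathcal{O}(-1)\oplus\mathcal{O}(-2)$ and the lower-left entry $\phi_+$, i.e.\ $s_i(\phi)$ for the destabilizing subbundle $i\colon\mathcal{O}(-1)\hookrightarrow E$, vanishes on $D$. Since $s_i(\phi)$ depends only on $i$ up to scale, this means $\mathrm{Nm}(\widetilde{D})=D$, where $\widetilde{D}$ is the Baker--Akhiezer divisor of $(E,\phi)$. By the splitting-type dichotomy of \Cref{sec-crystals}, $h^{-1}(b)\cap X_1^+=h^{-1}(b)\smallsetminus\{\mathrm{Hit}(b)\}$, i.e.\ the points with $\mathcal{L}\ne\mathcal{O}$. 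For these, \eqref{eq-LD} gives $\mathcal{L}\cong K^{-1}_{\widetilde{C}_b}(\widetilde{D})$, and $\widetilde{D}$ is the unique effective divisor in $|\mathcal{L}\otimes K_{\widetilde{C}_b}|$ by \eqref{eq-dimlinsys}. Hence the map $\mathcal{L}\mapsto\widetilde{D}$ is a bijection from $h^{-1}(b)\smallsetminus\{\mathrm{Hit}(b)\}$ onto the effective degree-two divisors $\widetilde{D}$ with $\widetilde{D}\notin|K_{\widetilde{C}_b}|$, i.e.\ $\widetilde{D}\neq\pi^\ast p$. Its inverse sends $\widetilde{D}$ to $K^{-1}(\widetilde{D})$, which is nontrivial precisely because $h^0(K)=2$ forces $|K|=\{\pi^\ast p\}$. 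So
\[ \alpha^+\cap h^{-1}(b)\;\longleftrightarrow\;\{\widetilde{D}\in\widetilde{C}_b^{(2)} : \mathrm{Nm}(\widetilde{D})=D,\ \widetilde{D}\neq\pi^\ast q\}. \]

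\emph{Step 2: count lifts.} For (i), with $p_1\neq p_2$, the lifts are $\widetilde{p}_1+\widetilde{p}_2$ with $\widetilde{p}_k\in\pi^{-1}(p_k)$. There are $2$ choices over a non-branch point and $1$ over a branch point, giving $2^{2-\deg D_b}$ divisors. None of them has the form $\pi^\ast q$, since $\mathrm{Nm}(\pi^\ast q)=2q$. For (ii), with $D=2p$, the lifts are $2\widetilde{p}$ and $\widetilde{p}+\sigma(\widetilde{p})=\pi^\ast p$, where $\sigma$ is the hyperelliptic involution. We must exclude $\pi^\ast p$. If $p$ is a branch point, the only lift is $2\widetilde{p}=\pi^\ast p$, so the intersection is empty. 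Otherwise two divisors $2\widetilde{p}$ remain, one for each $\widetilde{p}\in\pi^{-1}(p)$.

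\emph{Main obstacle.} The delicate point is Step 1. We must make sure that $s_i(\phi)$, computed with respect to the destabilizing subbundle, is exactly the entry $s_D$ appearing in \Cref{prop-flows}(i), and that distinct $\widetilde{D}$ give non-isomorphic Higgs bundles. I would justify both through the spectral correspondence, using the identity $\mathcal{L}\cong\pi^\ast L(\widetilde{D})$ together with $\dim|\mathcal{L}\otimes K|=0$ for $\mathcal{L}\neq\mathcal{O}$. A further subtlety is multiplicity: the statement counts points set-theoretically, and we do not need scheme-theoretic multiplicities here.
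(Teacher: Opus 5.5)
Your proposal is correct and follows the same route as the paper. Both identify $\alpha^+\cap h^{-1}(b)$ with the Baker--Akhiezer divisors of the triples $(\mathcal{O}(-1)\hookrightarrow\pi_\ast\mathcal{L},\pi_\ast\lambda)$ of norm $D$, discarding the lifts $\pi^\ast p$ because they correspond to $\mathrm{Hit}(b)$. The only difference is cosmetic: the paper rules out $\pi^\ast p$ via \Cref{rem-morphism-vanishes} (the inclusion of $\mathcal{O}(-1)$ would vanish at $p$, giving $\mathcal{O}\hookrightarrow E$), whereas you use $K_{\widetilde{C}_b}^{-1}(\pi^\ast p)\cong\mathcal{O}_{\widetilde{C}_b}$, and your bijection and count are spelled out more fully than the paper's sketch.
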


\begin{proof}
We prove the statement on the emptiness of $\alpha^+\cap h^{-1}(b)$, following \cite{HH21}.
The rest of the proposition concerning $\pi_\ast \mathcal{L} \in \alpha^+\cap h^{-1}(b)$ follows by considering Baker--Akhiezer divisors of triples 
$(\mathcal{O}(-1) \hookrightarrow \pi_\ast \mathcal{L}, \pi_\ast \lambda)$.
If $p$ is a branch point, then 
$2\widetilde{p} = \pi^\ast(p)$ where 
$\widetilde{p}$ is the ramification point of $\widetilde{C}_b$ over $p$.
If the Baker--Akhiezer divisor of the triple $(\mathcal{O}(-1) \hookrightarrow \pi_\ast \mathcal{L}, \pi_\ast \lambda)$ is 
$2\widetilde{p}$, then it follows from \Cref{rem-morphism-vanishes} that
the morphism $\mathcal{O}(-1) \rightarrow  \pi_\ast \mathcal{L}$ vanishes at $p$. 
So there is an embedding from $\mathcal{O} \cong \mathcal{O}(-1) \otimes \mathcal{O}(p)$ to $\pi_\ast \mathcal{L}$,
and hence $(\pi_\ast \mathcal{L}, \pi_\ast \lambda)$ is the intersection of the Hitchin section with $h^{-1}(b)$.
\end{proof}

\subsection{Crossing to the Hitchin section}
\label{sec-crossing}
Let $b\in B_s$.
Recall from \Cref{sec-BAdiv} that the Baker--Akhiezer divisor $\widetilde{D}$ of $(E,\phi)\in X_1^+\cap h^{-1}(b)$ is the Baker--Akhiezer divisor of the triple $(\mathcal{O}(-1)\hookrightarrow E,\phi)$, whose inclusion is unique up to scaling.
The spectral line bundle 
of $(E,\phi)$ is the image of $\widetilde{D}$ under the morphism
\begin{equation}\label{eq-blowup}
\widetilde{C}_b^{(2)}\longrightarrow\mathrm{Jac}(\widetilde{C}_b)=h^{-1}(b), 
\qquad \widetilde{D} \longmapsto \mathcal{O}_{\widetilde{C}_b}(\widetilde{D}) \otimes K_{\widetilde{C}_b}^{-1},
\end{equation}
where $\widetilde{D}$ is in the complement of the linear system
\[ |K_{\widetilde{C}_b}| = |\pi^\ast \mathcal{O}(1) | =  \{\pi^\ast p \mid p \in \mathbb{P}^1\} \]
in $\widetilde{C}_b^{(2)}$.
The image of the curve $|K_{\widetilde{C}_b}|$ is the single point 
\[ \mathcal{O}_{\widetilde{C}_b}(\pi^\ast p) \otimes K_{\widetilde{C}_b}^{-1} = \pi^\ast( \mathcal{O}(1)) \otimes K_{\widetilde{C}_b}^{-1} \cong \mathcal{O}_{\widetilde{C}_b} \]
in the Hitchin section.

The following proposition describes the limits of the Baker--Akhiezer divisors of the Higgs bundles in a smooth Hitchin fiber when the two points of $D = \mathrm{Nm}(\widetilde{D})$ collide.

\begin{proposition}\label{prop-crossing}
Let $b\in B_s$, and let $x_u\in X_1^+\cap h^{-1}(b)$, $0<u\le1$, be a continuous family whose limits $\underset{t\to0}{\lim}t.x_u$ are the Hodge bundles with divisors $p_1(u)+p_2(u)$, where $p_1(u),p_2(u)\to p$ as $u\to0$.
Then as $u \to 0$, the limit of the Baker--Akhiezer divisors $\widetilde{D}_u$ of $x_u$ is either 
\begin{enumerate}[label=(\roman*)]
\item a double point, 
$\underset{u \rightarrow 0}{\lim} \widetilde{D}_u = 2\widetilde{p}$, where $\mathrm{Nm}(\widetilde{p}) = p$ is not a branch point. 
In this case, $x_0 = \underset{u \rightarrow 0}{\lim} x_u$ 
is contained in $\alpha^+ \cap h^{-1}(b)$ where $\alpha = 2p \in \Delta$ is wobbly;
\item $\underset{u \rightarrow 0}{\lim} \widetilde{D}_u = \pi^\ast p$.
In this case, $x_u$ limits to the point $\mathrm{Hit}(b)$ of the Hitchin section.
\end{enumerate}
\end{proposition}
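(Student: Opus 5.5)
By compactness of $\widetilde{C}_b^{(2)}$ I would pass to a subsequence (or argue for every limit point) along which $\widetilde{D}_u$ converges to some $\widetilde{D}_0\in\widetilde{C}_b^{(2)}$. Write $\widetilde{D}_u=\widetilde{p}_1(u)+\widetilde{p}_2(u)$ with $\pi(\widetilde{p}_k(u))=p_k(u)$; this uses \Cref{lem-BA}(i) and the remark that $\lim_{t\to0}t.x_u$ has divisor $\mathrm{Nm}(\widetilde{D}_u)$. The norm map $\mathrm{Nm}\colon\widetilde{C}_b^{(2)}\to C^{(2)}$ is continuous, so $\mathrm{Nm}(\widetilde{D}_0)=2p$. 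Hence $\widetilde{D}_0$ is one of $2\widetilde{p}$, $2\sigma(\widetilde{p})$ or $\widetilde{p}+\sigma(\widetilde{p})=\pi^\ast p$, where $\pi^{-1}(p)=\{\widetilde{p},\sigma(\widetilde{p})\}$ and $\sigma$ is the hyperelliptic involution. If $p$ is a branch point, then $2\widetilde{p}=\pi^\ast p$, so only case (ii) occurs. If $p$ is not a branch point, either $\widetilde{D}_0=2\widetilde{p}$ for a point $\widetilde{p}$ over $p$, or $\widetilde{D}_0=\pi^\ast p$. This gives the dichotomy. To see that the whole family has a single limit, I would use continuity in $u$: the set of limit points of a continuous path is connected, and it lies in the finite set just described.

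For case (ii), I would apply the continuous morphism \eqref{eq-blowup} to get $x_u\mapsto \mathcal{O}(\widetilde{D}_u)\otimes K^{-1}_{\widetilde{C}_b}$. This converges to $\mathcal{O}(\pi^\ast p)\otimes K^{-1}_{\widetilde{C}_b}\cong\mathcal{O}_{\widetilde{C}_b}$, which is $\mathrm{Hit}(b)$ under the identification $h^{-1}(b)\cong\mathrm{Jac}(\widetilde{C}_b)$ (\Cref{sec-crossing}). Since that identification is a homeomorphism, $x_u\to\mathrm{Hit}(b)$.

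For case (i), the same map gives $x_u\to x_0$ with spectral line bundle $\mathcal{L}_0=K^{-1}_{\widetilde{C}_b}(2\widetilde{p})$. I then need to show $x_0\in\alpha^+$ for $\alpha=2p$. The step I expect to be the main obstacle is checking that $\mathcal{L}_0\neq\mathcal{O}_{\widetilde{C}_b}$, i.e.\ that $2\widetilde{p}\notin|K_{\widetilde{C}_b}|$. This holds because $\widetilde{p}\neq\sigma(\widetilde{p})$ and the divisors in $|K_{\widetilde{C}_b}|$ are exactly the fibres $\pi^\ast q$. It follows that $\pi_\ast\mathcal{L}_0\cong\mathcal{O}(-1)\oplus\mathcal{O}(-2)$, so $x_0\in X_1^+$ by \Cref{prop-flows}. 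Moreover $h^0(\mathcal{L}_0\otimes K_{\widetilde{C}_b})=1$, so the Baker--Akhiezer divisor of $x_0$ is the unique divisor in $|K_{\widetilde{C}_b}\otimes\mathcal{L}_0|$, namely $2\widetilde{p}$. Its norm is $2p$, so $\lim_{t\to0}t.x_0=\alpha$. Alternatively, I could simply invoke \Cref{lem-BA}(ii), which lists exactly these two Higgs bundles as $\alpha^+\cap h^{-1}(b)$. Finally, $\alpha=2p$ lies in $\Delta$, so it is wobbly by \Cref{cor-wobbly}.
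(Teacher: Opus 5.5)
Your proof is correct. It is essentially the argument the paper sets up in \Cref{sec-crossing} without writing it out: the limit divisor has norm $2p$, so it is $2\widetilde{p}$ or $\pi^\ast p$, and the morphism \eqref{eq-blowup}, which contracts $|K_{\widetilde{C}_b}|$ to $\mathrm{Hit}(b)$, together with \Cref{lem-BA}(ii) and \Cref{cor-wobbly}, gives the two cases. Your connectedness argument needs $u\mapsto\widetilde{D}_u$ to be continuous, which you leave implicit; it holds because \eqref{eq-blowup} restricts to a proper bijection $\widetilde{C}_b^{(2)}\smallsetminus|K_{\widetilde{C}_b}|\to h^{-1}(b)\smallsetminus\{\mathrm{Hit}(b)\}$, hence to a homeomorphism.
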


\begin{remark}\label{rem-collisions}
The two cases of \Cref{prop-crossing} also occur for rank-two Higgs bundles on compact Riemann surfaces of genus $g \ge 2$ \cite{DFT26}.
In that setting, the families of Higgs bundles in the two cases are analogues of families of flat connections 
in which two apparent singularities of opers collide.
Under specific conditions, the families of flat connections limit to smaller strata of Simpson's partial oper stratification of the de Rham moduli space.

Collisions of spectral zeros also appear in \cite{RS26}, which constructs quantum error-correcting codes from twisted cyclic and argyle quiver moduli \cite{RS18, RS21} on $\mathbb{P}^1$.
\end{remark}

\begin{remark}\label{rem-direction}
In \Cref{prop-crossing}(ii), the point $p$ records the direction in which $x_u$ approaches $\mathrm{Hit}(b)$.
Indeed, in genus two the morphism \eqref{eq-blowup} is the blow-up of $h^{-1}(b)$ at $\mathrm{Hit}(b)$.
The exceptional curve $|K_{\widetilde{C}_b}|$ \cite{BL04} is the projectivized tangent plane $\mathbb{P}(T_{\mathrm{Hit}(b)}h^{-1}(b))=\mathbb{P}(H^0(K_{\widetilde{C}_b})^\vee)$, which is the base $C=\mathbb{P}^1$, as the canonical map of $\widetilde{C}_b$ is $\pi$.
Under this identification, the point $\pi^\ast p$ of the exceptional curve is the direction $p$.
Conversely, if $x_u$ approaches $\mathrm{Hit}(b)$ in the direction $p$, the Hodge bundles $\underset{t\to0}{\lim}t.x_u$ limit to the wobbly Hodge bundle with divisor $2p$.
Since the limiting wobbly Hodge bundle depends on the direction $p$, the limit map $X_1^+\to X_1$ of \Cref{thm-BB}(ii) has no continuous extension to $\mathrm{Hit}(b)$.
\end{remark}

\section{Baker--Akhiezer sections and Landau levels}
\label{sec-BA}

In this section, we interpret in band theory the Baker--Akhiezer sections $\widetilde{s}_i(\phi)$ of the points of the upward flows $D^+$, $D\in X_1\cong C^{(2)}$ (\Cref{lem-BA}).

\subsection{Lowest Landau levels}
\label{sec-LLL}
Let $b\in B_s$, and let $\mathcal{L}$ be the momentum of $x=(E,\phi)\in h^{-1}(b)$.
We place the crystal with unit cell $\widetilde{C}_b$ in the uniform magnetic field $\mathbf{B}=\mathrm{vol}$.
Recall from \Cref{sec-fields} that the line bundle of $\mathbf{B}$ is
\[ L_{\mathbf{B}} = K_{\widetilde{C}_b}\cong\pi^\ast\mathcal{O}(1), \qquad N_\Phi=\deg K_{\widetilde{C}_b}=2g-2=2. \]
The wave functions of momentum $\mathcal{L}$ are then the sections of $\mathcal{L}\otimes K_{\widetilde{C}_b}$.
It follows from \eqref{eq-LLL} that the lowest Landau level of momentum $\mathcal{L}$ is
\[ H^0(\mathcal{L}\otimes K_{\widetilde{C}_b}). \]
As $\mathcal{L}$ varies, these are the lowest Landau levels computed by Iengo and Li on any closed hyperbolic surface \cite{IL94}. 
In our setting, it follows from \Cref{sec-fields} that the lowest Landau level $H^0(\mathcal{L}\otimes K_{\widetilde{C}_b})$ is one-dimensional at every nonzero momentum and two-dimensional at the zero momentum $\Gamma_b$, where it is $H^0(K_{\widetilde{C}_b})$ (cf.\ \eqref{eq-dimlinsys}).

In the Higgs bundle framework, placing the crystal in the magnetic field $\mathbf{B} = \mathrm{vol}$ corresponds to tensoring the spectral line bundle with $K_{\widetilde{C}_b}$.\footnote{In \cite{KR22}, Kienzle and the second author proposed to treat a magnetic field by replacing the line bundles of degree zero with line bundles of nonzero degree.}

At $\Gamma_b=\mathrm{Hit}(b)$, the inclusion $i$ of the summand $\mathcal{O}$ of $E=\mathcal{O}\oplus\mathcal{O}(-3)$ corresponds by adjunction to the constant function $\widetilde{s}_i(\phi)=1$.
The Baker--Akhiezer divisor of the triple $(\mathcal{O}\xrightarrow{i}E,\phi)$ is the trivial divisor.
In band theory, $\widetilde{s}_i(\phi)=1$ is the ground state at zero magnetic field, of energy $0$, and has no vortices.
\Cref{tab-fields} lists the lowest Landau levels and their states in the magnetic fields $L_{\mathbf{B}}=K_{\widetilde{C}_b}^n$, $n\geq0$.

\begin{table}[ht]
\centering
\small
\setlength{\tabcolsep}{3.5pt}
\renewcommand{\arraystretch}{1.5}
\begin{tabular}{l|l|l|l|l}
\begin{tabular}[t]{@{}l@{}}\textbf{Line}\\ \textbf{bundle} $L$\end{tabular} & \begin{tabular}[t]{@{}l@{}}\textbf{Magnetic}\\ \textbf{field} $L_{\mathbf{B}}$\end{tabular} & \begin{tabular}[t]{@{}l@{}}\textbf{Level at}\\ $\mathcal{L}\neq\mathcal{O}_{\widetilde{C}_b}$\end{tabular} & \textbf{Level at} $\Gamma_b$ & \textbf{Vortices} \\ \hline
$\mathcal{O}$ & $\mathcal{O}_{\widetilde{C}_b}$ (zero), $N_\Phi=0$ & $0$ & $\mathbb{C}\cdot1$ & none \\
$\mathcal{O}(-1)$ & $K_{\widetilde{C}_b}$, $N_\Phi=2$ & $\mathbb{C}\,\widetilde{s}_i(\phi)$ & $\pi^\ast H^0(\mathcal{O}(1))\cong\mathbb{C}^2$ & \begin{tabular}[t]{@{}l@{}}$\widetilde{D}$, fixed by $\mathcal{L}$;\\ $\pi^\ast p$ at $\Gamma_b$\end{tabular} \\
$\mathcal{O}(-n)$, $n\geq2$ & $K_{\widetilde{C}_b}^n$, $N_\Phi=2n$ & $\mathbb{C}^{2n-1}$ & $\mathbb{C}^{2n-1}$ & $2n$, with $2n-2$ free
\end{tabular}
\vspace{8pt} 
\caption{The lowest Landau levels $H^0(\mathcal{L}\otimes L_{\mathbf{B}})$ of momentum $\mathcal{L}$ in the uniform magnetic fields $\mathbf{B}=n\,\mathrm{vol}$, $L_{\mathbf{B}}=K_{\widetilde{C}_b}^n$ (for $n=0$, the states of energy $0$), and the vortices of their states. The states are the Baker--Akhiezer sections $\widetilde{s}_i(\phi)$ of the triples $(L\xrightarrow{i}E,\phi)$ over $b\in B_s$ with $L=\mathcal{O}(-n)$ (\Cref{sec-linsys}), so the vortices are the Baker--Akhiezer divisors $\widetilde{D}$. For $n=1$, $i$ is the inclusion of the summand $\mathcal{O}(-1)$ of $E$ away from $\Gamma_b$.
At $\Gamma_b$, the state $\pi^\ast s_p$, with $s_p\in H^0(\mathcal{O}(1))$ vanishing at $p\in\mathbb{P}^1$, has the vortices $\pi^\ast p$.}\label{tab-fields}
\end{table}

\paragraph{The jump at the zero momentum}
Recall from \Cref{sec-fields} that, in band theory, the jump at $\Gamma_b$ is a band touching (\Cref{fig-band-touching}, cf.\ \cite{IL94}).

In the Higgs bundle framework, this jump happens at the intersection of the Hitchin fiber $h^{-1}(b)$ with the Hitchin section.
Since $h^0(E(1))=h^0(\mathcal{L}\otimes K_{\widetilde{C}_b})$, the jump corresponds to the change of the splitting type of the underlying bundles 
from $\mathcal{O}(-1)\oplus\mathcal{O}(-2)$ to $\mathcal{O}\oplus\mathcal{O}(-3)$, i.e.\ from the stratum $X_1^+$ to $X_0^+$.

At every nonzero momentum, 
the two vortices of the lowest Landau level state form the Baker--Akhiezer divisor $\widetilde{D}$, which determines the momentum 
$\mathcal{L}\cong\mathcal{O}_{\widetilde{C}_b}(\widetilde{D})\otimes K_{\widetilde{C}_b}^{-1}$.
At $\Gamma_b$, every state is up to scaling the Baker--Akhiezer section $\pi^\ast s_p$ of a morphism $s_p\colon\mathcal{O}(-1)\to\mathcal{O}\subset E$,
which vanishes at a unique $p\in\mathbb{P}^1$.
So the vortices of $\pi^\ast s_p$ form $\pi^\ast p$, which is invariant under the inversion.
If a family of nonzero momenta limits to $\Gamma_b$ in the direction $p$, the vortices of its lowest Landau level states limit to $\pi^\ast p$ (\Cref{rem-direction}).
Up to scaling, each of these states is the canonical section of the divisor of its vortices.
As canonical sections vary continuously with the divisor in $\widetilde{C}_b^{(2)}$, the states limit, up to scaling, to the state $\pi^\ast s_p$.
So every nonzero state of the jumped level is such a limit.
The limiting states are labeled by the complex directions of approach.

\paragraph{Magnetic fields of higher flux}
For an integer $n\geq1$, the uniform magnetic field $\mathbf{B}=n\,\mathrm{vol}$ has the line bundle $L_{\mathbf{B}}=K_{\widetilde{C}_b}^n$ and $N_\Phi=2n$ flux quanta.
It follows from \eqref{eq-LLL} and \Cref{sec-linsys} that the lowest Landau level of momentum $\mathcal{L}$ in this magnetic field has energy $n$ and is
\[ H^0(\widetilde{C}_b,\mathcal{L}\otimes K_{\widetilde{C}_b}^n)\cong\mathrm{Hom}(\mathcal{O}(-n),E). \]
So the linear systems $|\mathcal{L}\otimes K_{\widetilde{C}_b}^n|$ of \Cref{sec-linsys} are the projectivized lowest Landau levels in the magnetic fields $\mathbf{B}=n\,\mathrm{vol}$.

Thus, in the Higgs bundle framework, the line bundle $L=\mathcal{O}(-n)$ of a triple, i.e.\ $k=-n$ in \Cref{sec-linsys}, determines the magnetic field: $L_{\mathbf{B}}=K_{\widetilde{C}_b}^n\cong\pi^\ast L^{-1}$.
The Baker--Akhiezer sections $\widetilde{s}_i(\phi)$ of the triples $(\mathcal{O}(-n)\xrightarrow{i}E,\phi)$ are the lowest Landau level states in the magnetic field $\mathbf{B}=n\,\mathrm{vol}$, so the Baker--Akhiezer divisors consist of the $2n$ vortices of these states.

In every magnetic field $\mathbf{B}=n\,\mathrm{vol}$, $n\geq1$, the vortices determine the momentum: $\mathcal{L}\cong\mathcal{O}_{\widetilde{C}_b}(\widetilde{D})\otimes K_{\widetilde{C}_b}^{-n}$.
It follows from this formula and Abel's theorem that the image of the vortex divisor $\widetilde{D}$ under the Abel--Jacobi map is fixed by the momentum.
On a torus, this is the fact that the sum of the vortices of a lowest Landau level state is fixed by the boundary conditions, i.e.\ by the momentum, cf.\ \cite{HR85}.

For $n\geq2$, the flux exceeds the borderline flux $2g-2=2$ of \Cref{sec-fields}, so the lowest Landau level has dimension $2n-1=N_\Phi+1-g$ at every momentum, $\Gamma_b$ included.
The projectivization of this degenerate level is the linear system $|\mathcal{L}\otimes K_{\widetilde{C}_b}^n|\cong\mathbb{P}^{2n-2}$ of Baker--Akhiezer divisors.
The vortices of a superposition of lowest Landau level states move in this linear system.
The $2n$ vortices are coordinates on the projectivized level: $2n-2=N_\Phi-g$ of them can be prescribed at generic points, and the remaining $g=2$ are then fixed by the momentum.
A weak potential, which splits the level by degenerate perturbation theory, selects particular vortex configurations.

\paragraph{Jumps and Brill--Noether loci}
It follows from \eqref{eq-LLL} that the momenta at which the lowest Landau level in a magnetic field of $d$ flux quanta has dimension at least $r+1$ form a translate of the Brill--Noether locus \cite{ACGH85}
\[ W^r_d(\widetilde{C}_b)=\{M\in\mathrm{Pic}^d(\widetilde{C}_b) : h^0(M)\geq r+1\}. \]
At the borderline flux $d=2g-2$, the Riemann--Roch theorem gives $W^{g-1}_{2g-2}(\widetilde{C}_b)=\{K_{\widetilde{C}_b}\}$.
In genus two, $K_{\widetilde{C}_b}\cong\pi^\ast\mathcal{O}(1)$ is the hyperelliptic line bundle.
The pencil $|K_{\widetilde{C}_b}|$ of the fibers of $\pi$ is the projectivized jumped level.

The Riemann--Roch theorem and Serre duality give $h^0(\mathcal{L}\otimes K_{\widetilde{C}_b})=g-1+h^0(\mathcal{L}^{-1})$ at every momentum $\mathcal{L}$.
So the dimension of the lowest Landau level in the magnetic field $\mathbf{B}=\mathrm{vol}$ exceeds its generic value $g-1$ exactly by the number of states of energy $0$ at zero magnetic field and the opposite momentum $\mathcal{L}^{-1}$ (the first two rows of \Cref{tab-fields}).
This number is $1$ at $\Gamma_b$, where the constant function is such a state, and $0$ at every other momentum.

\subsection{Theta functions and the Fourier--Mukai transform}
\label{sec-theta}
We now summarize the basics of the Fourier--Mukai transform before relating Baker--Akhiezer sections to theta functions.
In \Cref{sec-4d}, we then interpret the theta functions as the lowest Landau level states of a four-dimensional Euclidean crystal with the same Brillouin zone.

\paragraph{The Jacobian and its dual}
Fix $b\in B_s$, and let $J=h^{-1}(b)=\mathrm{Jac}(\widetilde{C}_b)$.
We recall from \cite{HH21} how $J$ is identified with its dual $\widehat J=\mathrm{Pic}^0(J)$ of translation-invariant line bundles, using standard facts on Jacobians of genus two curves \cite{BL04}.

Fix a Weierstrass point $w_0\in\widetilde{C}_b$, i.e.\ a ramification point of the hyperelliptic curve, 
so that $K_{\widetilde{C}_b}\cong\mathcal{O}_{\widetilde{C}_b}(2w_0)$.
The \emph{Abel--Jacobi map}
\[ \mathrm{A}\colon\widetilde{C}_b\to J, \qquad y\mapsto\mathcal{O}_{\widetilde{C}_b}(y-w_0), \]
maps $\widetilde{C}_b$ isomorphically onto the \emph{theta divisor} $\Theta=\mathrm{A}(\widetilde{C}_b)$.
The divisor $\Theta$ is symmetric, i.e.\ $-\Theta=\Theta$.
Its class $\theta\in H^2(J,\mathbb{Z})$ is the \emph{principal polarization} of $J$.

The divisor $\Theta$ defines the isomorphism
\[ \varphi_\Theta\colon J\to\widehat J, \qquad a\mapsto t_a^\ast\mathcal{O}_J(\Theta)\otimes\mathcal{O}_J(-\Theta), \]
where $t_a\colon J\to J$, $z\mapsto z+a$, 
is the translation by $a\in J$.
Pulling back line bundles along $\mathrm{A}$ 
defines an isomorphism $\mathrm{A}^\ast\colon\widehat J\to J$ \cite[Lemma~11.3.1]{BL04}.
The two isomorphisms are inverse to each other up to sign \cite[Prop.~11.3.5]{BL04}:
\[ \mathrm{A}^\ast\circ\varphi_\Theta=-\mathrm{id}_J. \]
As in \cite{HH21}, we identify $J$ with $\widehat J$ by $-\varphi_\Theta=(\mathrm{A}^\ast)^{-1}$.

The \emph{theta function} of $\Theta$ is a nonzero section $\vartheta\in H^0(J,\mathcal{O}_J(\Theta))$ whose divisor is $\Theta$.
The section $\vartheta$ is unique up to scaling.
Its pullback to the universal cover $\mathbb{C}^2$ of $J$ is, up to a nowhere-vanishing factor, a translate of Riemann's theta function \cite{BL04}.

\paragraph{The Fourier--Mukai transform}
Let $\mathcal{P}$ be the Poincar\'e bundle on $J\times\widehat J\cong J\times J$, normalized to be trivial on $\{\mathcal{O}_{\widetilde{C}_b}\}\times\widehat J$.
For $\mathcal{L}\in J$, the line bundle $P_{\mathcal{L}}=\mathcal{P}|_{\{\mathcal{L}\}\times\widehat J}$, regarded as a line bundle on $J$ via the identification $-\varphi_\Theta$,\footnote{With the identification $\varphi_\Theta=(-\varphi_\Theta)\circ(-1_J)$ instead, one would obtain $\mathrm{A}^\ast P_{\mathcal{L}}\cong\mathcal{L}^{-1}$, since $(-1_J)^\ast M\cong M^{-1}$ for every $M\in\mathrm{Pic}^0(J)$.} is then characterized by the property \cite[(6.4)]{HH21}
\begin{equation}\label{eq-AJP}
\mathrm{A}^\ast P_{\mathcal{L}}\cong\mathcal{L}.
\end{equation}
We denote by $S_J$ the Fourier--Mukai transform on $J$,
\[ S_J(\mathcal{F})=R\mathrm{pr}_{2\ast}(\mathrm{pr}_1^\ast\mathcal{F}\otimes\mathcal{P}). \]
The transform of the skyscraper sheaf at a point $\mathcal{L}$ of $J$ is $P_{\mathcal{L}}$.

Hausel--Hitchin have worked out the Fourier--Mukai transforms of the structure sheaves of the upward flows of very stable Hodge bundles \cite[Thm.~1.5]{HH21}, motivated by mirror symmetry in the classical limit of the geometric Langlands correspondence \cite[Conj.~2.5]{DP12}.
Hausel--Hitchin relate these transforms to the approach of Donagi--Pantev to the correspondence \cite{DP09}.
This approach uses a Fourier--Mukai transform along the Hitchin fibers \cite{DP08, DP12} and has been carried out for rank-two parabolic bundles on $\mathbb{P}^1$ with five marked points \cite{DP22}.

In our setting, let $\mathcal{M}^\circ=h^{-1}(B_s)$, a principally polarized and self-dual abelian scheme over $B_s$, and let $S$ be the relative transform of sheaves on $\mathcal{M}^\circ$ whose kernel is the relative Poincar\'e bundle on $\mathcal{M}^\circ\times_{B_s}\mathcal{M}^\circ$, normalized along the Hitchin section.
Let $\alpha$ be a very stable Hodge bundle.
Since $h$ restricts to a finite and flat morphism $\alpha^+\to B$ (\Cref{prop-verystable}), $S(\mathcal{O}_{\alpha^+\cap\mathcal{M}^\circ})$ is a vector bundle on $\mathcal{M}^\circ$, namely the pushforward of the Poincar\'e bundle along the corresponding finite flat cover.
The restriction of this vector bundle to $J$ is $S_J(\mathcal{O}_{\alpha^+\cap J})$.
\begin{itemize}
\item If $\alpha$ is the uniformizing Higgs bundle, then $\alpha^+=\mathrm{Hit}(B)$ intersects $J$ in $\Gamma_b$, so $S(\mathcal{O}_{\alpha^+\cap\mathcal{M}^\circ})$ is the trivial line bundle.
The restriction of this line bundle to $J$ is $P_{\mathcal{O}_{\widetilde{C}_b}}=\mathcal{O}_J$, which pulls back to the zero momentum sector $\mathcal{O}_{\widetilde{C}_b}$.
In the magnetic field $L_{\mathbf{B}}=K_{\widetilde{C}_b}$, the lowest Landau level of this pullback is the jumped level $H^0(K_{\widetilde{C}_b})\cong\mathbb{C}^2$ of \Cref{tab-fields}.
\item If $\alpha=D=p_1+p_2\in X_1$ is a very stable Hodge bundle, i.e.\ $p_1\neq p_2$, then $S(\mathcal{O}_{D^+\cap\mathcal{M}^\circ})$ has rank $4$.
If $p_1$ and $p_2$ are not branch points of $\pi$, then $D^+\cap J$ consists of the four momenta $\mathcal{L}_1,\dots,\mathcal{L}_4$ of \Cref{lem-BA}(i), with Baker--Akhiezer sections $\widetilde{s}_{i_j}(\phi_j)$ and divisors $\widetilde{D}_1,\dots,\widetilde{D}_4$, so that
\[ S_J(\mathcal{O}_{D^+\cap J})\cong\bigoplus_{j=1}^4P_{\mathcal{L}_{j}}. \]
It follows from \eqref{eq-AJP} that the pullback of this transform to the unit cell is the direct sum of the four momentum sectors.
So in the magnetic field $L_{\mathbf{B}}=K_{\widetilde{C}_b}$, the lowest Landau level of this pullback is the direct sum of those of the four momenta $\mathcal{L}_j$.
This level is spanned by the four states whose vortices lie over $p_1$ and $p_2$:
\[ \mathrm{A}^\ast S_J(\mathcal{O}_{D^+\cap J})\cong\bigoplus_{j=1}^4\mathcal{L}_{j}, \qquad \bigoplus_{j=1}^4H^0(\widetilde{C}_b,\mathcal{L}_{j}\otimes K_{\widetilde{C}_b})=\bigoplus_{j=1}^4\mathbb{C}\widetilde{s}_{i_j}(\phi_j). \]
\end{itemize}

\paragraph{Translates of the theta function}
For $\mathcal{L}\in J$, the translation-invariant line bundle $P_{\mathcal{L}}$ is
\[ P_{\mathcal{L}}\cong t_{-\mathcal{L}}^\ast\mathcal{O}_J(\Theta)\otimes\mathcal{O}_J(-\Theta). \]
Indeed, the right-hand side is $\varphi_\Theta(-\mathcal{L})$, so its pullback along $\mathrm{A}$ is $\mathcal{L}$, as $\mathrm{A}^\ast\circ\varphi_\Theta=-\mathrm{id}_J$.
Since $\mathrm{A}^\ast\colon\widehat J\to J$ is an isomorphism \cite[Lemma~11.3.1]{BL04}, it follows from \eqref{eq-AJP} that $P_{\mathcal{L}}$ is isomorphic to the right-hand side.
Hence $P_{\mathcal{L}}(\Theta)=P_{\mathcal{L}}\otimes\mathcal{O}_J(\Theta)\cong t_{-\mathcal{L}}^\ast\mathcal{O}_J(\Theta)$, whose sections are the scalings of the translate
\[ \vartheta_{\mathcal{L}}=t_{-\mathcal{L}}^\ast\vartheta, \qquad \vartheta_{\mathcal{L}}(z)=\vartheta(z-\mathcal{L}), \qquad \mathrm{div}\,\vartheta_{\mathcal{L}}=\Theta+\mathcal{L}. \]

\paragraph{Restriction to the curve}
Since the canonical bundle of $J$ is trivial, adjunction along $\Theta\cong\widetilde{C}_b$ gives $\mathrm{A}^\ast\mathcal{O}_J(\Theta)\cong K_{\widetilde{C}_b}$.
The pullbacks along $\mathrm{A}$ of the line bundle $P_{\mathcal{L}}(\Theta)$ and of its section $\vartheta_{\mathcal{L}}$ are
\[ \mathrm{A}^\ast\big(P_{\mathcal{L}}(\Theta)\big)\cong\mathcal{L}\otimes K_{\widetilde{C}_b}, \]
\[ \mathrm{A}^\ast\vartheta_{\mathcal{L}}\in H^0(\widetilde{C}_b,\mathcal{L}\otimes K_{\widetilde{C}_b}), \qquad (\mathrm{A}^\ast\vartheta_{\mathcal{L}})(y)=\vartheta(\mathrm{A}(y)-\mathcal{L}), \quad y\in\widetilde{C}_b. \]
Thus $\mathrm{A}^\ast\vartheta_{\mathcal{L}}$ is the restriction of $\vartheta_{\mathcal{L}}$ to $\Theta$, so the zeros of $\mathrm{A}^\ast\vartheta_{\mathcal{L}}$ are the points $y$ with $\mathrm{A}(y)\in\Theta+\mathcal{L}$.

\paragraph{Riemann's vanishing theorem}
If $\mathcal{L}=\mathcal{O}_{\widetilde{C}_b}$, then $\mathrm{A}^\ast\vartheta=0$, as $\vartheta$ vanishes on $\Theta=\mathrm{A}(\widetilde{C}_b)$.
If $\mathcal{L}\neq\mathcal{O}_{\widetilde{C}_b}$, then $\Theta+\mathcal{L}\neq\Theta$, so $\vartheta_{\mathcal{L}}$ does not vanish identically on $\Theta$.
Hence $\mathrm{A}^\ast\vartheta_{\mathcal{L}}$ is a nonzero section of $\mathcal{L}\otimes K_{\widetilde{C}_b}$, whose divisor $\mathrm{A}^\ast(\Theta+\mathcal{L})$ has degree $\Theta\cdot\Theta=2$.
As $h^0(\mathcal{L}\otimes K_{\widetilde{C}_b})=1$ (cf.\ \eqref{eq-dimlinsys}), this divisor is the unique effective divisor in $|\mathcal{L}\otimes K_{\widetilde{C}_b}|$.
It follows from \eqref{eq-LD} that this divisor is the Baker--Akhiezer divisor $\widetilde{D}=\widetilde{p}_1+\widetilde{p}_2$ of the Higgs bundle $(\pi_\ast\mathcal{L},\pi_\ast\lambda)$:
\[ \mathrm{div}\,\mathrm{A}^\ast\vartheta_{\mathcal{L}}=\mathrm{A}^\ast(\Theta+\mathcal{L})=\widetilde{p}_1+\widetilde{p}_2. \]

\paragraph{Theta functions and lowest Landau levels}
For a momentum $\mathcal{L} \in J$, pulling back along the Abel--Jacobi map defines the linear map
\[ \mathrm{A}^\ast\colon H^0(J,P_{\mathcal{L}}(\Theta))\longrightarrow H^0(\widetilde{C}_b,\mathcal{L}\otimes K_{\widetilde{C}_b}) \]
from the multiples of $\vartheta_{\mathcal{L}}$ to the lowest Landau level of momentum $\mathcal{L}$ of the hyperbolic crystal in the magnetic field $L_{\mathbf{B}}=K_{\widetilde{C}_b}$ (\Cref{tab-fields}).

The following facts follow from Riemann's vanishing theorem and \eqref{eq-dimlinsys}.
\begin{itemize}
\item If $\mathcal{L}\neq\mathcal{O}_{\widetilde{C}_b}$, this map is an isomorphism, so the restricted theta function and the Baker--Akhiezer section $\widetilde{s}_i(\phi)$ are proportional,
\begin{equation}\label{eq-theta-BA}
\mathbb{C}\,\mathrm{A}^\ast\vartheta_{\mathcal{L}}=\mathbb{C}\,\widetilde{s}_i(\phi).
\end{equation}
So the vortices $\widetilde{p}_1$ and $\widetilde{p}_2$ of the lowest Landau level states are the zeros of $y\mapsto\vartheta(\mathrm{A}(y)-\mathcal{L})$.

\item At $\Gamma_b$, this map is zero, whereas the lowest Landau level jumps to $H^0(K_{\widetilde{C}_b})\cong\mathbb{C}^2$ (cf.\ \Cref{tab-fields}).
\end{itemize}

Iengo and Li write lowest Landau level states in a magnetic field of $g$ flux quanta as pullbacks of theta functions along the Abel--Jacobi map \cite[\S2]{IL94}.
Their magnetic field is uniform not for the hyperbolic metric but for the Bergman metric, the pullback along $\mathrm{A}$ of the translation-invariant metric on $J$ whose K\"ahler form represents $\theta$.
In their magnetic field, the lowest Landau levels are again the spaces \eqref{eq-LLL} of holomorphic sections.
In genus two, where the flux $g$ equals $2g-2$, their theta functions are, up to nowhere-vanishing factors, the restrictions $\mathrm{A}^\ast\vartheta_{\mathcal{L}}$.
So their theta function vanishes identically exactly at $\Gamma_b$, where the lowest Landau level $H^0(K_{\widetilde{C}_b})$ is two-dimensional.

\subsection{A four-dimensional crystal}\label{sec-4d}

\paragraph{Band theory with the Jacobian as unit cell}
The Brillouin zone $J$ of the hyperbolic crystal is also the Brillouin zone of a four-dimensional Euclidean crystal, whose unit cell is the dual torus $\widehat J=\mathrm{Pic}^0(J)$ \cite{KR22}.
The momentum sectors of this crystal are the flat unitary line bundles on its unit cell: the momentum sector of $\mathcal{L}\in J$ is $P_{\mathcal{L}}$, with its flat unitary connection, so the wave functions of momentum $\mathcal{L}$ are the sections of $P_{\mathcal{L}}$.
As in \Cref{sec-theta}, we identify $\widehat J$ with $J$ via $-\varphi_\Theta$, 
so that the unit cell and the Brillouin zone of this crystal are both $J$, although physically the first is the space of positions and the second is the space of momenta.

\begin{table}[ht]
\centering
\renewcommand{\arraystretch}{1.5}
\begin{tabular}{>{\raggedright\arraybackslash}p{0.45\textwidth}|>{\raggedright\arraybackslash}p{0.45\textwidth}}
\textbf{Hyperbolic crystal} & \textbf{Four-dimensional Euclidean crystal} \\ \hline
unit cell $\widetilde{C}_b$, of genus two, with its hyperbolic metric & unit cell $\widehat J\cong J$, a flat torus of real dimension four \\
Brillouin zone $J=\mathrm{Jac}(\widetilde{C}_b)$ & Brillouin zone $J$ \\
momentum $\mathcal{L}\in J$, with momentum sector $L_\chi$, a flat line bundle on $\widetilde{C}_b$ & momentum $\mathcal{L}\in J$, with momentum sector $P_{\mathcal{L}}$, a flat line bundle on $J$ \\
magnetic field $L_{\mathbf{B}}=K_{\widetilde{C}_b}$, of $2$ flux quanta & magnetic field $L_{\mathbf{B}}=\mathcal{O}_J(\Theta)$, of the principal polarization $\theta$ \\
wave functions in the magnetic field: sections of $\mathcal{L}\otimes K_{\widetilde{C}_b}$ & wave functions in the magnetic field: sections of $P_{\mathcal{L}}(\Theta)$ \\
lowest Landau level $H^0(\widetilde{C}_b,\mathcal{L}\otimes K_{\widetilde{C}_b})$, of dimension $1$, and $2$ at $\Gamma_b$ & lowest Landau level $H^0(J,P_{\mathcal{L}}(\Theta))$, of dimension $1$ at every momentum \\
state: the Baker--Akhiezer section $\widetilde{s}_i(\phi)$, a scaling of $\mathrm{A}^\ast\vartheta_{\mathcal{L}}$ & state: the theta function $\vartheta_{\mathcal{L}}$ \\
vortices: the two points $\widetilde{D}=\mathrm{A}^\ast(\Theta+\mathcal{L})$ ($\mathcal{L}\neq\mathcal{O}_{\widetilde{C}_b}$) & vortices: the curve $\Theta+\mathcal{L}$, a translate of the hyperbolic unit cell $\Theta$, intersecting $\Theta$ in $\mathrm{A}(\widetilde{D})=\Theta\cap(\Theta+\mathcal{L})$ ($\mathcal{L}\neq\mathcal{O}_{\widetilde{C}_b}$)
\end{tabular}
\vspace{8pt} 
\caption{The hyperbolic crystal and the four-dimensional Euclidean crystal of \cite{KR22} with the same Brillouin zone $J$: the first three rows are those of \cite{KR22}, whereas the others describe both crystals in a magnetic field. The two crystals are related by the Abel--Jacobi map $\mathrm{A}\colon\widetilde{C}_b\xrightarrow{\sim}\Theta\subset J$, with $\mathrm{A}^\ast P_{\mathcal{L}}\cong\mathcal{L}$ and $\mathrm{A}^\ast\mathcal{O}_J(\Theta)\cong K_{\widetilde{C}_b}$ as holomorphic line bundles.}\label{tab-crystals}
\end{table}

The hyperbolic unit cell is embedded in the four-dimensional unit cell $\widehat J \equiv J$
as the theta divisor $\Theta=\mathrm{A}(\widetilde{C}_b)$. 
It follows from \eqref{eq-AJP} that 
pulling back along the Abel--Jacobi map $\mathrm{A}$ maps the wave functions of the four-dimensional crystal to those of the hyperbolic crystal, preserving the momentum.

Let us now turn on magnetic fields. 
We put on the four-dimensional unit cell the uniform magnetic field with line bundle
\[ L_{\mathbf{B}}=\mathcal{O}_J(\Theta), \]
endowed with the Hermitian metric whose curvature is $-2\pi i$ times the translation-invariant K\"ahler form on $J$ representing $\theta$.
As for the hyperbolic crystal, recall from \Cref{sec-fields} that 
the wave functions of momentum $\mathcal{L}$ in the magnetic field $L_{\mathbf{B}}=\mathcal{O}_J(\Theta)$ are the sections of $L_{\mathbf{B}}\otimes P_{\mathcal{L}}=P_{\mathcal{L}}(\Theta)$.
It follows from \eqref{eq-LLL} that the lowest Landau level of momentum $\mathcal{L}$ is
\[ H^0(J,P_{\mathcal{L}}(\Theta)), \]
which is spanned by $\vartheta_{\mathcal{L}}$.
In contrast to the hyperbolic crystal setting, the lowest Landau level is now one-dimensional at every momentum, including the zero momentum $\Gamma_b$.

\paragraph{The crystallographic meaning of theta functions}
The four-dimensional crystal suggests an answer to the questions raised in \cite{KR22} about the crystallographic meaning of the principal polarization of the Brillouin zone, of the theta divisor and of theta functions.
From the point of view of the four-dimensional crystal, the principal polarization $\theta$ is the class of the magnetic field $L_{\mathbf{B}}=\mathcal{O}_J(\Theta)$.
The theta divisor is the hyperbolic unit cell embedded by the Abel--Jacobi map.
Theta functions are the lowest Landau level states in this magnetic field.

It follows from \eqref{eq-theta-BA} that the restrictions of these states to the hyperbolic unit cell are the Baker--Akhiezer sections at the nonzero momenta, as holomorphic sections and up to scaling.
At $\Gamma_b$, the restriction is zero, as $\vartheta$ vanishes on $\Theta=\mathrm{A}(\widetilde{C}_b)$.

\FloatBarrier

\section*{Acknowledgements}
\foreignlanguage{vietnamese}{Đ}.Q.D.\ is supported by the German Research Foundation (DFG) via the Walter Benjamin fellowship.
S.R.\ acknowledges partial support from a Natural Sciences and Engineering Research Council of Canada (NSERC) Discovery Grant.

\section*{Tool and computational resource disclosure}
The authors started to look for a specific relation and interpretation of wobbly (Higgs) bundles in the framework of band theory \cite{KR22} in November 2025, 
in particular via the characterization of wobbliness in terms of Baker--Akhiezer divisors following \cite{HH21, Din24}.
We started to work out the nilpotent Hitchin fiber 
and wobbly locus of the moduli of $\mathcal{O}(3)$-twisted Higgs bundles in early summer 2026.
Subsequently, AI assistance (Claude Opus 5 \& 5.5) was used in working out the interpretation of Baker--Akhiezer sections in band theory, in particular the relation to Landau levels, 
in doing computations and in finalizing this paper, 
including making all the figures.
The first-named author carried out these AI experiments; the second-named author used only traditional methods in this work.
Both authors nonetheless take full responsibility for AI usage and can make available specific queries and responses as appropriate.

We have checked all statements and proofs of the paper and take full responsibility for them.

\printbibliography

@article{BB73,
	author  = {Bia{\l}ynicki-Birula, A.},
	title   = {Some theorems on actions of algebraic groups},
	journal = {Ann. of Math. (2)},
	volume  = {98},
	pages   = {480--497},
	year    = {1973}
}

@article{HH21,
	author  = {Hausel, T. and Hitchin, N.},
	title   = {Very stable {H}iggs bundles, equivariant multiplicity and mirror symmetry},
	journal = {Invent. Math.},
	volume  = {228},
	pages   = {893--989},
	year    = {2022},
	doi     = {10.1007/s00222-021-01093-7}
}

@article{HRV15,
	author  = {Hausel, T. and Rodriguez-Villegas, F.},
	title   = {Cohomology of large semiprojective hyperk{\"a}hler varieties},
	journal = {Ast{\'e}risque},
	volume  = {370},
	pages   = {113--156},
	year    = {2015}
}

@article{Hit87a,
	author  = {Hitchin, N.},
	title   = {The self-duality equations on a {R}iemann surface},
	journal = {Proc. London Math. Soc. (3)},
	volume  = {55},
	pages   = {59--126},
	year    = {1987}
}

@article{Laumon88,
	author  = {Laumon, G.},
	title   = {Un analogue global du c{\^o}ne nilpotent},
	journal = {Duke Math. J.},
	volume  = {57},
	number  = {2},
	pages   = {647--671},
	year    = {1988}
}

@article{Markman94,
	author  = {Markman, E.},
	title   = {Spectral curves and integrable systems},
	journal = {Compositio Math.},
	volume  = {93},
	number  = {3},
	pages   = {255--290},
	year    = {1994}
}

@article{Nitsure,
	author  = {Nitsure, N.},
	title   = {Moduli space of semistable pairs on a curve},
	journal = {Proc. London Math. Soc. (3)},
	volume  = {62},
	number  = {2},
	pages   = {275--300},
	year    = {1991}
}

@article{RS18,
	author  = {Rayan, S. and Sundbo, E.},
	title   = {Twisted argyle quivers and {H}iggs bundles},
	journal = {Bull. Sci. Math.},
	volume  = {146},
	pages   = {1--32},
	year    = {2018},
	doi     = {10.1016/j.bulsci.2018.03.003}
}

@article{RS21,
	author  = {Rayan, S. and Sundbo, E.},
	title   = {Twisted cyclic quiver varieties on curves},
	journal = {Eur. J. Math.},
	volume  = {7},
	number  = {1},
	pages   = {205--225},
	year    = {2021}
}

@article{MR21,
	author  = {Maciejko, J. and Rayan, S.},
	title   = {Hyperbolic band theory},
	journal = {Sci. Adv.},
	volume  = {7},
	pages   = {eabe9170},
	year    = {2021},
	doi     = {10.1126/sciadv.abe9170}
}

@article{MR22,
	author  = {Maciejko, J. and Rayan, S.},
	title   = {Automorphic {B}loch theorems for hyperbolic lattices},
	journal = {Proc. Natl. Acad. Sci. USA},
	volume  = {119},
	pages   = {e2116869119},
	year    = {2022},
	doi     = {10.1073/pnas.2116869119}
}

@article{BGKMRT22,
	author  = {Boettcher, I. and Gorshkov, A. V. and Koll{\'a}r, A. J. and Maciejko, J. and Rayan, S. and Thomale, R.},
	title   = {Crystallography of hyperbolic lattices},
	journal = {Phys. Rev. B},
	volume  = {105},
	pages   = {125118},
	year    = {2022},
	doi     = {10.1103/PhysRevB.105.125118}
}

@article{KR22,
	author  = {Kienzle, E. and Rayan, S.},
	title   = {Hyperbolic band theory through {H}iggs bundles},
	journal = {Adv. Math.},
	volume  = {409},
	pages   = {108664},
	year    = {2022},
	doi     = {10.1016/j.aim.2022.108664}
}

@article{NR24,
	author  = {Nagy, {\'A}. and Rayan, S.},
	title   = {On the hyperbolic {B}loch transform},
	journal = {Ann. Henri Poincar{\'e}},
	volume  = {25},
	number  = {3},
	pages   = {1713--1732},
	year    = {2024},
	doi     = {10.1007/s00023-023-01336-8}
}

@unpublished{XMGTRM25,
	author     = {Xu, X. and Mahmoud, A. A. and Gorgichuk, N. and Thomale, R. and Rayan, S. and Mariantoni, M.},
	title      = {A scalable superconducting circuit framework for emulating physics in hyperbolic space},
	eprinttype = {arxiv},
	eprint     = {2510.23827},
	year       = {2025}
}

@unpublished{Din24,
	author     = {{Đinh D.}},
	title      = {Rank-2 wobbly bundles from special divisors on spectral curves},
	eprinttype = {arxiv},
	eprint     = {2406.04224},
	year       = {2024}
}

@article{BNR89,
	author  = {Beauville, A. and Narasimhan, M. S. and Ramanan, S.},
	title   = {Spectral curves and the generalised theta divisor},
	journal = {J. Reine Angew. Math.},
	volume  = {398},
	pages   = {169--179},
	year    = {1989}
}

@article{Kri77,
	author  = {Krichever, I. M.},
	title   = {Methods of algebraic geometry in the theory of non-linear equations},
	journal = {Russian Math. Surveys},
	volume  = {32},
	number  = {6},
	pages   = {185--213},
	year    = {1977}
}

@article{Dub81,
	author  = {Dubrovin, B. A.},
	title   = {Theta functions and non-linear equations},
	journal = {Russian Math. Surveys},
	volume  = {36},
	number  = {2},
	pages   = {11--92},
	year    = {1981},
	doi     = {10.1070/RM1981v036n02ABEH002596}
}

@article{TKNN82,
	author  = {Thouless, D. J. and Kohmoto, M. and Nightingale, M. P. and den Nijs, M.},
	title   = {Quantized {H}all conductance in a two-dimensional periodic potential},
	journal = {Phys. Rev. Lett.},
	volume  = {49},
	pages   = {405--408},
	year    = {1982}
}

@article{IL94,
	author  = {Iengo, R. and Li, D.},
	title   = {Quantum mechanics and quantum {H}all effect on {R}iemann surfaces},
	journal = {Nuclear Phys. B},
	volume  = {413},
	pages   = {735--753},
	year    = {1994},
	eprinttype = {arxiv},
	eprint  = {hep-th/9307011}
}

@incollection{Kle16,
	author    = {Klevtsov, S.},
	title     = {Geometry and large {N} limits in {L}aughlin states},
	booktitle = {Travaux math\'ematiques},
	volume    = {24},
	pages     = {63--127},
	publisher = {Univ. Luxembourg},
	year      = {2016},
	eprinttype = {arxiv},
	eprint    = {1608.02928}
}

@article{ASZ94,
	author  = {Avron, J. E. and Seiler, R. and Zograf, P. G.},
	title   = {Adiabatic quantum transport: quantization and fluctuations},
	journal = {Phys. Rev. Lett.},
	volume  = {73},
	pages   = {3255--3257},
	year    = {1994},
	eprinttype = {arxiv},
	eprint  = {hep-lat/9405017}
}

@misc{Ray26,
	author        = {Rayan, S.},
	title         = {Beyond {$K$}-theory: geometry and holomorphy in hyperbolic band theory},
	year          = {2026},
	eprint        = {2608.01596},
	archivePrefix = {arXiv},
	note          = {Preprint}
}

@book{BL04,
	author    = {Birkenhake, C. and Lange, H.},
	title     = {Complex Abelian Varieties},
	edition   = {2},
	series    = {Grundlehren Math. Wiss.},
	volume    = {302},
	publisher = {Springer},
	address   = {Berlin},
	year      = {2004}
}

@book{AM76,
	author    = {Ashcroft, N. W. and Mermin, N. D.},
	title     = {Solid State Physics},
	publisher = {Holt, Rinehart and Winston},
	address   = {New York},
	year      = {1976}
}

@article{Kuc16,
	author  = {Kuchment, P.},
	title   = {An overview of periodic elliptic operators},
	journal = {Bull. Amer. Math. Soc.},
	volume  = {53},
	number  = {3},
	pages   = {343--414},
	year    = {2016}
}

@article{HR85,
	author  = {Haldane, F. D. M. and Rezayi, E. H.},
	title   = {Periodic {L}aughlin-{J}astrow wave functions for the fractional quantized {H}all effect},
	journal = {Phys. Rev. B},
	volume  = {31},
	pages   = {2529--2531},
	year    = {1985},
	doi     = {10.1103/PhysRevB.31.2529}
}

@article{KFH19,
	author  = {Koll\'ar, A. J. and Fitzpatrick, M. and Houck, A. A.},
	title   = {Hyperbolic lattices in circuit quantum electrodynamics},
	journal = {Nature},
	volume  = {571},
	number  = {7763},
	pages   = {45--50},
	year    = {2019},
	doi     = {10.1038/s41586-019-1348-3}
}

@article{Com87,
	author  = {Comtet, A.},
	title   = {On the {L}andau levels on the hyperbolic plane},
	journal = {Ann. Physics},
	volume  = {173},
	pages   = {185},
	year    = {1987},
	doi     = {10.1016/0003-4916(87)90098-4}
}

@unpublished{RS26,
	author        = {Rayan, S. and Sundbo, E.},
	title         = {Quantum error correcting codes from quiver moduli: {G}rassmannians, cyclic monodromy, and argyle collision divisors},
	year          = {2026},
	note          = {In preparation}
}

@unpublished{DDNP26,
	author        = {{Đinh D.} and Donagi, R. and Nollau, C. and Pantev, T.},
	title         = {Shaky locus in semiprojective variety from {B}ia{\l}ynicki-{B}irula stratification},
	year          = {2026},
	note          = {In preparation}
}

@unpublished{DT23,
	author     = {{Đinh D.} and Teschner, J.},
	title      = {Classical limit of the geometric {L}anglands correspondence for {$SL(2,\mathbb{C})$}},
	eprinttype = {arxiv},
	eprint     = {2312.13393},
	year       = {2023}
}

@unpublished{DFT26,
	author     = {{Đinh D.} and Figiel, T. and Teschner, J.},
	title      = {Stratification of opers with apparent singularities and isomonodromic bubbling},
	eprinttype = {arxiv},
	eprint     = {2610.22326},
	year       = {2026}
}

@article{Bottacin95,
	author  = {Bottacin, F.},
	title   = {Symplectic geometry on moduli spaces of stable pairs},
	journal = {Ann. Sci. \'Ecole Norm. Sup. (4)},
	volume  = {28},
	number  = {4},
	pages   = {391--433},
	year    = {1995}
}

@incollection{DP09,
	author    = {Donagi, R. and Pantev, T.},
	title     = {Geometric {L}anglands and non-abelian {H}odge theory},
	booktitle = {Surveys in Differential Geometry {XIII}},
	pages     = {85--116},
	publisher = {International Press},
	address   = {Somerville, MA},
	year      = {2009}
}

@article{PPe19,
	author  = {Pauly, C. and Pe{\'o}n-Nieto, A.},
	title   = {Very stable bundles and properness of the {H}itchin map},
	journal = {Geom. Dedicata},
	volume  = {198},
	pages   = {143--148},
	year    = {2019}
}

@misc{MSTR26,
	author        = {Mahmoud, A. A. and Sorbello, R. and Thomale, R. and Rayan, S.},
	title         = {Quantum geometry in hyperbolic band theory},
	year          = {2026},
	eprint        = {2609.14876},
	archivePrefix = {arXiv},
	note          = {Preprint}
}

@incollection{Ray26b,
	author    = {Rayan, S.},
	title     = {Quantum matter from algebraic geometry and number theory},
	booktitle = {Nankai Symposium on Mathematical Dialogues},
	editor    = {He, Y.-H. and Ge, M.-L. and Bai, C.-M. and Bao, J. and Hirst, E.},
	publisher = {Springer},
	address   = {Singapore},
	pages     = {289--292},
	year      = {2026},
	doi       = {10.1007/978-981-19-2328-9_33}
}

@article{DK10,
	author  = {Douglas, M. R. and Klevtsov, S.},
	title   = {Bergman kernel from path integral},
	journal = {Comm. Math. Phys.},
	volume  = {293},
	number  = {1},
	pages   = {205--230},
	year    = {2010},
	doi     = {10.1007/s00220-009-0915-0}
}

@misc{CL25,
	author        = {Charles, L. and Lefeuvre, T.},
	title         = {Semiclassical defect measures of magnetic {L}aplacians on hyperbolic surfaces},
	year          = {2025},
	eprint        = {2505.08584},
	archivePrefix = {arXiv},
	note          = {Preprint}
}

@book{DP08,
	author    = {Donagi, R. and Pantev, T.},
	title     = {Torus fibrations, gerbes, and duality},
	series    = {Mem. Amer. Math. Soc.},
	volume    = {193},
	number    = {901},
	publisher = {American Mathematical Society},
	address   = {Providence, RI},
	year      = {2008},
	note      = {With an appendix by D.~Arinkin}
}

@article{DP12,
	author  = {Donagi, R. and Pantev, T.},
	title   = {Langlands duality for {H}itchin systems},
	journal = {Invent. Math.},
	volume  = {189},
	number  = {3},
	pages   = {653--735},
	year    = {2012}
}

@book{DP22,
	author    = {Donagi, R. and Pantev, T.},
	title     = {Parabolic {H}ecke eigensheaves},
	series    = {Ast\'erisque},
	volume    = {435},
	publisher = {Soci\'et\'e Math\'ematique de France},
	address   = {Paris},
	year      = {2022}
}

@article{Len22,
	author  = {Lenggenhager, P. M. and Stegmaier, A. and Upreti, L. K. and Hofmann, T. and Helbig, T. and Vollhardt, A. and Greiter, M. and Lee, C. H. and Imhof, S. and Brand, H. and Kie{\ss}ling, T. and Boettcher, I. and Neupert, T. and Thomale, R. and Bzdu{\v{s}}ek, T.},
	title   = {Simulating hyperbolic space on a circuit board},
	journal = {Nat. Commun.},
	volume  = {13},
	pages   = {4373},
	year    = {2022},
	doi     = {10.1038/s41467-022-32042-4}
}

@article{ZYSSZ22,
	author  = {Zhang, W. and Yuan, H. and Sun, N. and Sun, H. and Zhang, X.},
	title   = {Observation of novel topological states in hyperbolic lattices},
	journal = {Nat. Commun.},
	volume  = {13},
	pages   = {2937},
	year    = {2022},
	doi     = {10.1038/s41467-022-30631-x}
}

@article{Che23,
	author  = {Chen, A. and Brand, H. and Helbig, T. and Hofmann, T. and Imhof, S. and Fritzsche, A. and Kie{\ss}ling, T. and Stegmaier, A. and Upreti, L. K. and Neupert, T. and Bzdu{\v{s}}ek, T. and Greiter, M. and Thomale, R. and Boettcher, I.},
	title   = {Hyperbolic matter in electrical circuits with tunable complex phases},
	journal = {Nat. Commun.},
	volume  = {14},
	pages   = {622},
	year    = {2023},
	doi     = {10.1038/s41467-023-36359-6}
}

@article{ZDZSZ23,
	author  = {Zhang, W. and Di, F. and Zheng, X. and Sun, H. and Zhang, X.},
	title   = {Hyperbolic band topology with non-trivial second {C}hern numbers},
	journal = {Nat. Commun.},
	volume  = {14},
	pages   = {1083},
	year    = {2023},
	doi     = {10.1038/s41467-023-36767-8}
}

@article{Hua24,
	author  = {Huang, L. and He, L. and Zhang, W. and Zhang, H. and Liu, D. and Feng, X. and Liu, F. and Cui, K. and Huang, Y. and Zhang, W. and Zhang, X.},
	title   = {Hyperbolic photonic topological insulators},
	journal = {Nat. Commun.},
	volume  = {15},
	year    = {2024},
	doi     = {10.1038/s41467-024-46035-y}
}

@article{IAM21,
	author  = {Ikeda, K. and Aoki, S. and Matsuki, Y.},
	title   = {Hyperbolic band theory under magnetic field and {D}irac cones on a higher genus surface},
	journal = {J. Phys.: Condens. Matter},
	volume  = {33},
	pages   = {485602},
	year    = {2021},
	doi     = {10.1088/1361-648X/ac24c4}
}

@article{SUTB22,
	author  = {Stegmaier, A. and Upreti, L. K. and Thomale, R. and Boettcher, I.},
	title   = {Universality of {H}ofstadter butterflies on hyperbolic lattices},
	journal = {Phys. Rev. Lett.},
	volume  = {128},
	pages   = {166402},
	year    = {2022},
	doi     = {10.1103/PhysRevLett.128.166402}
}

@article{Res92,
	author  = {Reshetikhin, N.},
	title   = {The {K}nizhnik--{Z}amolodchikov system as a deformation of the isomonodromy problem},
	journal = {Lett. Math. Phys.},
	volume  = {26},
	pages   = {167--177},
	year    = {1992},
	doi     = {10.1007/BF00404594}
}

@unpublished{Tes17,
	author     = {Teschner, J.},
	title      = {Classical conformal blocks and isomonodromic deformations},
	eprinttype = {arxiv},
	eprint     = {1707.07968},
	year       = {2017}
}

@article{KMMW17,
	author  = {Klevtsov, S. and Ma, X. and Marinescu, G. and Wiegmann, P.},
	title   = {Quantum {H}all effect and {Q}uillen metric},
	journal = {Comm. Math. Phys.},
	volume  = {349},
	number  = {3},
	pages   = {819--856},
	year    = {2017},
	doi     = {10.1007/s00220-016-2789-2}
}

@book{ACGH85,
	author    = {Arbarello, E. and Cornalba, M. and Griffiths, P. A. and Harris, J.},
	title     = {Geometry of Algebraic Curves. {V}ol. {I}},
	series    = {Grundlehren Math. Wiss.},
	volume    = {267},
	publisher = {Springer-Verlag},
	address   = {New York},
	year      = {1985}
}

@online{GLC,
	author  = {Gaitsgory, D. and Raskin, S. and others},
	title   = {Proof of the geometric {L}anglands conjecture},
	url     = {https://people.mpim-bonn.mpg.de/gaitsgde/GLC/},
	year    = {2024}
}

@unpublished{Ari16,
	author     = {Arinkin, D.},
	title      = {Irreducible connections admit generic oper structures},
	eprinttype = {arxiv},
	eprint     = {1602.08989},
	year       = {2016}
}

\end{document}